\newcommand{\R}{\mathbb{R}}
\newcommand{\C}{\mathbb{C}}
\newcommand{\E}{\mathbb{E}}
\newcommand{\ket}[1]{| #1 \rangle}
\newcommand{\bra}[1]{\langle #1|}
\newcommand{\proj}[1]{|#1\rangle\langle #1|}
\newcommand{\ip}[2]{\langle #1|#2 \rangle}
\newcommand{\bracket}[3]{\langle #1|#2|#3 \rangle}
\DeclareMathOperator{\tr}{tr}
\DeclareMathOperator{\diag}{diag}
\DeclareMathOperator{\rank}{rank}
\newcommand{\be}{\begin{equation}}
\newcommand{\ee}{\end{equation}}
\newcommand{\bea}{\begin{eqnarray}}
\newcommand{\eea}{\end{eqnarray}}
\newcommand{\bes}{\begin{equation*}}
\newcommand{\ees}{\end{equation*}}
\newcommand{\beas}{\begin{eqnarray*}}
\newcommand{\eeas}{\end{eqnarray*}}
\newtheorem{thm}{Theorem}
\newtheorem*{thm*}{Theorem}
\newtheorem{cor}[thm]{Corollary}
\newtheorem{lem}[thm]{Lemma}
\newtheorem*{lem*}{Lemma}
\theoremstyle{definition}
\newtheorem{dfn}{Definition}
\begin{document}

\title{Limitations on quantum dimensionality reduction}


\author{Aram W.\ Harrow$^{1,2}$, Ashley Montanaro\footnote{{\tt am994@cam.ac.uk}} $^3$ and Anthony J.\ Short$^{3}$\\ \\ {\small $^1$ Department of Computer Science \& Engineering, University of
Washington, Seattle, USA} \\
{\small $^2$ Department of Mathematics, University of Bristol, Bristol, UK
}\\  {\small $^3$ Centre for Quantum Information and Foundations, DAMTP, University of Cambridge, UK}}

\maketitle

\begin{abstract}
The Johnson-Lindenstrauss Lemma is a classic result which implies that any set of $n$ real vectors can be compressed to $O(\log n)$ dimensions while only distorting pairwise Euclidean distances by a constant factor. Here we consider potential extensions of this result to the compression of quantum states. We show that, by contrast with the classical case, there does not exist any distribution over quantum channels that significantly reduces the dimension of quantum states while preserving the 2-norm distance with high probability. We discuss two tasks for which the 2-norm distance is indeed the correct figure of merit. In the case of the trace norm, we show that the dimension of low-rank mixed states can be reduced by up to a square root, but that essentially no dimensionality reduction is possible for highly mixed states.
\end{abstract}


\section{Introduction}

The Johnson-Lindenstrauss (JL) Lemma \cite{johnson84} is a dimensionality reduction result which has found a vast array of applications in computer science and elsewhere (see e.g.\ \cite{indyk01,indyk98,kushilevitz98}). It can be stated as follows:
\begin{thm}[Johnson-Lindenstrauss Lemma \cite{johnson84}]
For all dimensions $d$, $e$, there is a distribution $\mathcal{D}$ over linear maps $\mathcal{E}:\R^d \rightarrow \R^e$ such that, for all real vectors $v$, $w$,
\[ \Pr_{\mathcal{E} \sim \mathcal{D}} [ (1-\epsilon)\|v - w \|_2 \le \| \mathcal{E}(v) - \mathcal{E}(w)\|_2 \le \|v - w \|_2] \ge 1 - \exp(-\Omega(\epsilon^2 e)), \]
\end{thm}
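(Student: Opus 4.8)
The plan is to exhibit an explicit distribution $\mathcal{D}$ and reduce the two-vector statement to a single one-vector concentration inequality. By linearity of $\mathcal{E}$ we have $\mathcal{E}(v) - \mathcal{E}(w) = \mathcal{E}(v-w)$, so setting $u = v-w$ it suffices to prove that for every fixed $u \in \R^d$,
\[ \Pr_{\mathcal{E} \sim \mathcal{D}}\bigl[(1-\epsilon)\|u\|_2 \le \|\mathcal{E}(u)\|_2 \le \|u\|_2\bigr] \ge 1 - \exp(-\Omega(\epsilon^2 e)). \]
Since every quantity here is homogeneous of degree one in $u$ and the event is scale-invariant, I may further assume $\|u\|_2 = 1$. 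Thus the whole theorem collapses to a sharp estimate on the length of one fixed unit vector under a random map. Note in particular that, because the theorem is stated for a single pair with the probability placed inside $\Pr$, no union bound over pairs is required (unlike the usual $n$-point formulation).

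For the distribution I would take $\mathcal{E}(x) = \tfrac{1}{\sqrt{e}}\,Gx$, where $G$ is an $e \times d$ matrix with independent $N(0,1)$ entries; the Haar-random projection variant $\mathcal{E} = \sqrt{d/e}\,P$ behaves identically and is arguably more natural for the paper's later quantum setting. The virtue of this choice is that for fixed unit $u$ each coordinate $(\mathcal{E}(u))_i = \tfrac{1}{\sqrt{e}}\langle g_i, u\rangle$ is a centred Gaussian of variance $1/e$, and the $e$ coordinates are independent. Hence $e\,\|\mathcal{E}(u)\|_2^2 = \sum_{i=1}^e \langle g_i, u\rangle^2$ is a sum of $e$ i.i.d.\ $\chi^2_1$ variables, i.e.\ it is $\chi^2_e$-distributed with mean $e$; equivalently $\E\,\|\mathcal{E}(u)\|_2^2 = 1 = \|u\|_2^2$. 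The target is then precisely the statement that this chi-squared variable does not stray from its mean by more than a $(1 \pm O(\epsilon))$ factor.

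The technical heart, and the \emph{main obstacle}, is the tail bound on $\chi^2_e$. I would obtain it by the standard Chernoff / moment-generating-function route: using $\E[e^{tZ^2}] = (1-2t)^{-1/2}$ for $Z \sim N(0,1)$ together with independence, bound the upper tail $\Pr[\sum_i \langle g_i,u\rangle^2 \ge (1+\delta)e]$ and the matching lower tail, then optimise over $t$. This yields both tails at rate $\exp(-\Omega(\delta^2 e))$ for small $\delta$, which is exactly the claimed $\exp(-\Omega(\epsilon^2 e))$ once the squared-norm deviation $\delta$ is related to the norm deviation $\epsilon$ (the event $\|\mathcal{E}(u)\|_2 \in [1-\epsilon,\,1+\epsilon]$ is implied by $\|\mathcal{E}(u)\|_2^2 \in [1-\epsilon,\,1+\epsilon]$, costing only constants). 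Pinning down a clean constant in the exponent is the one place requiring care, but it is a routine optimisation once the MGF is in hand.

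Finally, to match the precise one-sided upper bound $\|\mathcal{E}(u)\|_2 \le \|u\|_2$ in the statement rather than $\le (1+\epsilon)\|u\|_2$, I would rescale the map by $1/(1+\epsilon)$, taking $\mathcal{E} = \tfrac{1}{(1+\epsilon)\sqrt{e}}\,G$. On the same high-probability event the upper estimate then reads $\|\mathcal{E}(u)\|_2 \le \|u\|_2$, while the lower estimate degrades only to $\tfrac{1-\epsilon}{1+\epsilon}\|u\|_2 \ge (1 - O(\epsilon))\|u\|_2$; re-parametrising $\epsilon$ absorbs this change without harming the $\exp(-\Omega(\epsilon^2 e))$ failure rate, giving the theorem exactly as stated.
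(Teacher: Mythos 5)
The paper never proves this statement: it is quoted as a known result directly from Johnson and Lindenstrauss \cite{johnson84} and used as a black box, so there is no internal proof to compare your attempt against. Judged on its own merits, your proposal is correct, and it is the standard modern proof of the lemma (the Gaussian-matrix argument in the style of Indyk--Motwani and Dasgupta--Gupta, rather than the original route via Haar-random orthogonal projections and concentration of measure on the sphere, though you rightly note the two are essentially interchangeable). The reduction to a single unit vector by linearity and homogeneity is exactly right, and it is also right that no union bound is needed for this single-pair formulation. For fixed unit $u$ the coordinates $\langle g_i, u\rangle$ are i.i.d.\ $N(0,1)$ by rotational invariance of the Gaussian, so $e\,\|\mathcal{E}(u)\|_2^2$ is indeed $\chi^2_e$-distributed, and the moment-generating-function bound $\E[e^{tZ^2}] = (1-2t)^{-1/2}$ delivers two-sided tails of order $\exp(-\Omega(\delta^2 e))$; your passage from squared-norm deviations to norm deviations costs only constants, as claimed. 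Your final rescaling by $1/(1+\epsilon)$ deserves particular credit: the contractive (one-sided) form in which the paper states the lemma cannot be achieved by an $\epsilon$-independent scaling such as $\tfrac{1}{\sqrt{e}}G$ alone, since the squared norm exceeds its mean with probability near $1/2$, so an $\epsilon$-dependent shrinking of the map (with the stated re-parametrisation absorbing the loss) is genuinely necessary and is handled correctly here. The only caveat is that your distribution $\mathcal{D}$ therefore depends on $\epsilon$, which is consistent with the statement as written, where $\epsilon$ appears as a free parameter fixed in advance.
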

\noindent where $\|\cdot\|_2$ is the Euclidean ($\ell_2$) distance. The lemma is usually applied via the following corollary, which follows by taking a union bound:
\begin{cor}
\label{cor:jl}
Given a set $S$ of $n$ $d$-dimensional real vectors, there is a linear map $\mathcal{E}:\R^d \rightarrow \R^{O(\log n/\epsilon^2)}$ that preserves all Euclidean distances in $S$, up to a multiple of $1-\epsilon$. Further, there is an efficient randomised algorithm to find and implement $\mathcal{E}$.
\end{cor}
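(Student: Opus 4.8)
The plan is to apply the Johnson-Lindenstrauss Lemma to each pair of vectors in $S$ and combine the resulting guarantees with a union bound, exactly as the remark preceding the statement suggests. The crucial feature of the Lemma that makes this work is that the distribution $\mathcal{D}$ does not depend on $v$ and $w$: a \emph{single} draw $\mathcal{E} \sim \mathcal{D}$ simultaneously preserves every pairwise distance with good probability, so we never need to recondition the distribution on the vectors we are looking at.

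First I would fix a target dimension $e$ to be determined and draw one map $\mathcal{E} \sim \mathcal{D}$. For any fixed pair $v, w \in S$, the Lemma guarantees that the event
\[ (1-\epsilon)\|v-w\|_2 \le \|\mathcal{E}(v)-\mathcal{E}(w)\|_2 \le \|v-w\|_2 \]
fails with probability at most $\exp(-\Omega(\epsilon^2 e))$. Since $S$ contains $\binom{n}{2} < n^2$ distinct pairs, the union bound shows that the probability that the single map $\mathcal{E}$ distorts \emph{some} pairwise distance by more than a factor of $1-\epsilon$ is at most $n^2 \exp(-\Omega(\epsilon^2 e))$.

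The next step is to choose $e$ so that this failure probability is bounded below $1$ (indeed, below any desired constant). Taking logarithms, it suffices to have $2\ln n - \Omega(\epsilon^2 e) < 0$, i.e.\ $e = \Omega(\log n / \epsilon^2)$ with a suitable constant in front. For such an $e$ the probability that $\mathcal{E}$ preserves all pairwise distances up to a factor $1-\epsilon$ is strictly positive, which establishes existence of the claimed map into $\R^{O(\log n/\epsilon^2)}$.

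For the algorithmic claim I would observe that by enlarging the constant in $e$ we can make the success probability at least, say, $1/2$, so the following randomised procedure succeeds: sample $\mathcal{E} \sim \mathcal{D}$ (the standard JL distributions, e.g.\ a suitably scaled random Gaussian matrix, can be sampled in $\poly(d,e)$ time), then verify in $O(n^2 e)$ time that all $\binom{n}{2}$ distances are preserved, repeating if verification fails. The expected number of repetitions is $O(1)$, giving an efficient Las Vegas algorithm. I do not expect a genuine obstacle here: the only points requiring care are bookkeeping the constant hidden in the $\Omega(\cdot)$ so that the union bound closes, and invoking the explicit form of $\mathcal{D}$ to justify that both sampling and verification run in polynomial time.
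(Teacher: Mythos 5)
Your proposal is correct and follows exactly the route the paper indicates: the paper derives this corollary from the JL Lemma by a union bound over the $\binom{n}{2}$ pairs in $S$, choosing $e = O(\log n/\epsilon^2)$ so that $n^2\exp(-\Omega(\epsilon^2 e)) < 1$, with the algorithmic claim following from the efficient sampleability of the JL distribution. Your additional care about the verification step and the Las Vegas repetition is a reasonable way to make the ``efficient randomised algorithm'' claim fully explicit, but it is the same argument.
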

There are several remarkable aspects of this result. First, the target dimension does not depend on the source dimension $d$ at all. Second, the randomised algorithm can be simply stated as: choose a random $e$-dimensional subspace with $e = O(\log n/\epsilon^2)$, project each vector in $S$ onto this subspace, and rescale the result by a constant that does not depend on $S$. Third, this algorithm is {\em oblivious}: in other words, $\mathcal{E}$ does not depend on the vectors whose dimensionality is to be reduced.

More generally, let $\ell_p^d$ be the vector space $\R^d$ equipped with the $\ell_p$ norm $\|\cdot\|_p$. A {\em randomised embedding} from $\ell_p^d$ to $\ell_p^e$ with distortion\footnote{We use this somewhat clumsy definition of distortion for consistency with prior work.} $1/(1-\epsilon)$ and failure probability $\delta$ is a distribution $\mathcal{D}$ over maps $\mathcal{E}: \R^d \rightarrow \R^e$ such that, for all $v,w \in \R^d$,
\[ \Pr_{\mathcal{E} \sim \mathcal{D}} \left[ (1-\epsilon) \|v - w\|_p \le \|\mathcal{E}(v) - \mathcal{E}(w)\|_p \le \|v - w\|_p \right] \ge 1-\delta. \]
This definition does not allow the distance between vectors to increase; such embeddings are called contractive. The JL Lemma states that there exists a randomised embedding from $\ell_2^d$ to $\ell_2^e$ with distortion $1/(1-\epsilon)$ and failure probability $\exp(-\Omega(\epsilon^2e))$. Another natural norm to consider in this context is $\ell_1$. In this case the situation is less favourable: it has been shown by Charikar and Sahai~\cite{charikar02} that there exist $O(d)$ points in $\ell_1^d$ such that any linear embedding into $\ell_1^e$ must incur distortion $\Omega(\sqrt{d/e})$. Brinkman and Charikar later gave a set of $n$ points for which any (even non-linear) embedding achieving distortion $D$ requires $n^{\Omega(1/D^2)}$ dimensions \cite{brinkman05}.


\subsection{The JL Lemma in quantum information theory}

The JL Lemma immediately gives rise to a protocol for {\em quantum fingerprinting}~\cite{buhrman01}, or in other words efficient equality testing. Imagine that Alice and Bob each have an $n$-bit string, and are required to send quantum states of the shortest possible length to a referee, who has to use these states to determine if their bit strings are equal (this is the so-called SMP, or simultaneous message passing, model of communication complexity \cite{kushilevitz97}). Associate each bit string with an orthonormal basis vector of $\R^{2^n}$. Then the JL Lemma guarantees that there exists a map from $\R^{2^n}$ into $\R^{O(n)}$ such that the inner products between all of these $2^n$ vectors are preserved, up to a small constant. So Alice and Bob each simply apply this map to their vectors, renormalise the output (which makes very little difference to the inner products), and send the $O(\log n)$ qubit states corresponding to the resulting $O(n)$-dimensional vectors to the referee, who applies the swap test to the states \cite{buhrman01}. Given two states $\ket{\psi}$, $\ket{\phi}$, this test accepts with probability $\frac{1}{2} + \frac{1}{2}|\ip{\psi}{\phi}|^2$. As the inner products are approximately preserved by the map into $\R^{O(n)}$, the referee can distinguish between the two cases of the states he receives being equal or distinct, with constant probability.

More generally, Alice and Bob can use a similar SMP protocol to solve the following task: given quantum states $\ket{\psi_A}$, $\ket{\psi_B}$, each picked from a set of $k$ states, determine $\ip{\psi_A}{\psi_B}$ up to a constant. Whatever the initial dimension of the states, the JL Lemma (strictly speaking, an easy extension of the JL Lemma to complex vectors) guarantees that they can be compressed to $O(\log k)$ dimensions with at most constant distortion, implying that the referee can estimate $\ip{\psi_A}{\psi_B}$ up to a constant using only $O(\log \log k)$ qubits of communication.

However, there is a problem with this protocol. While it is oblivious in the sense that it does not depend on the $k$ states which are given as input, it is not oblivious in the following quantum sense: Alice and Bob each need to know what their states are in order to apply the embedding\footnote{On the other hand, if the unphysical operation of postselection is allowed, the JL Lemma can be applied directly.}. One would expect the right quantum analogue of a randomised embedding to map quantum states to quantum states in an oblivious fashion. Such an algorithm can be expressed as a distribution over quantum channels (completely positive, trace preserving (CPTP) maps \cite{nielsen00,watrous08a}), which are the class of physically implementable operations in quantum theory.

Let $\mathcal{B}(d)$ denote the set of $d$-dimensional Hermitian operators. The distance between quantum states $\rho$, $\sigma \in \mathcal{B}(d)$ can be measured using the Schatten $p$-norm $\|\rho-\sigma\|_p$, which is defined as $\|X\|_p = \left(\sum_i |\lambda_i(X)|^p\right)^{1/p}$, where $\lambda_i(X)$ is the $i$'th eigenvalue of $X$. The case $p=1$ is known as the trace norm, and $p=2$ is sometimes known as the Hilbert-Schmidt norm. We have the following definition.

\begin{dfn}
A {\em quantum embedding} from $S \subseteq \mathcal{B}(d)$ to $\mathcal{B}(e)$ in the Schatten $p$-norm, with distortion $1/(1-\epsilon)$ and failure probability $\delta$, is a distribution $\mathcal{D}$ over quantum channels $\mathcal{E}: \mathcal{B}(d) \rightarrow \mathcal{B}(e)$ such that, for all $\rho$, $\sigma \in S$,
\[ \Pr_{\mathcal{E} \sim \mathcal{D}} \left[ (1-\epsilon) \|\rho - \sigma\|_p \le \|\mathcal{E}(\rho) - \mathcal{E}(\sigma)\|_p \le \|\rho - \sigma\|_p \right] \ge 1-\delta. \]
\end{dfn}

Rather than only considering embeddings that succeed for all states in $\mathcal{B}(d)$, we generalise the definition to subsets of states. An interesting such subset is the pure states, for which one might imagine stronger embeddings can be obtained. Indeed, a closely related notion has been studied before by Winter \cite{winter04}, and more recently Hayden and Winter \cite{hayden10}, under the name of quantum identification for the identity channel. In this setting, the sender Alice has a pure state $\ket{\psi}\in\C^d$ and the receiver Bob is given the description of a pure state $\ket\phi\in\C^d$.  Alice encodes her state $\ket{\psi}$ as a quantum message using a quantum channel $\mathcal{E}:\mathcal{B}(\C^d) \rightarrow \mathcal{B}(\C^e)$ and sends it to Bob, who performs a measurement $(D_{\phi},I-D_{\phi})$ on the message. The goal is to obtain approximately the same measurement statistics as if Bob had performed the measurement $(\proj{\phi},I-\proj{\phi})$ on $\ket{\psi}$:
\[ \forall\; \ket{\psi},\ket{\phi},\; | \tr[D_{\phi}\,\mathcal{E}(\proj{\psi})] - |\ip{\psi}{\phi}|^2| \le \epsilon. \]
Winter showed in \cite{winter04} that, for constant $\epsilon$, this can be achieved with $e = O(\sqrt{d})$; note that the resulting states $\mathcal{E}(\proj{\psi})$ are highly mixed. Winter's result allows the development of a one-way protocol for testing equality of $n$-bit strings using $\frac{1}{2}\log_2 n + O(1)$ qubits of communication from Alice to Bob, which is still the best known separation between one-way quantum and classical communication complexity for total functions \cite{aaronson05a}. In our terminology, the result of \cite{winter04} shows that there exists a quantum embedding from $\mathcal{B}(d)$ to $\mathcal{B}(O(\sqrt{d}))$ that approximately preserves the trace distance between (initially) pure states. But note that one aspect of Winter's result is stronger than we need: he showed the existence of a channel such that the distance is approximately preserved between {\em all} pairs of states. Here, we are interested in finding distributions $\mathcal{D}$ over channels $\mathcal{E}$ such that, for an arbitrary pair of states, the distance is approximately preserved with high probability; this is potentially a weaker notion. In particular, it is not necessarily true that the individual channel obtained by averaging over $\mathcal{D}$ will preserve the distance between an arbitrary pair of states.

We pause to mention that the JL Lemma has found some other uses in quantum information theory. Cleve et al \cite{cleve04} used it to give an upper bound on the amount of shared entanglement required to win a particular class of nonlocal games. Gavinsky, Kempe and de Wolf \cite{gavinsky06} used it to give a simulation of arbitrary quantum communication protocols by quantum SMP protocols (with exponential overhead). Embeddings between norms have also been used. Aubrun, Szarek and Werner \cite{aubrun10,aubrun10a} have used a version of Dvoretzky's theorem on ``almost-Euclidean'' subspaces of matrices under Schatten norms to give counterexamples to the additivity conjectures of quantum information theory. And, very recently, Fawzi, Hayden and Sen \cite{fawzi10} have used ideas from the theory of low-distortion embeddings of the ``$\ell_1(\ell_2)$'' norm to prove the existence of strong entropic uncertainty relations.


\subsection{Our results}

In this paper, we show that the dimensionality reduction that can be achieved by quantum embeddings is very limited. We begin, in Section \ref{sec:2norm}, by considering the Schatten 2-norm (which is just the vector 2-norm on matrices). We show that, in stark contrast to the JL Lemma, any quantum embedding which preserves the 2-norm distance between (say) orthogonal pure states with constant distortion and constant failure probability can only achieve at most a constant reduction in dimension.

One potential criticism of this result is that the 2-norm is not usually seen as a physically meaningful distance measure, as compared with the trace norm. However, we argue in Section \ref{sec:interpret} that for certain problems the 2-norm is indeed the correct distance measure. We discuss two problems -- equality testing without a reference frame and state discrimination with a random measurement -- where the 2-norm appears naturally as the figure of merit.

In Section \ref{sec:1norm} we turn to the trace norm, for which we have upper and lower bounds. On the upper bound side, we extend the result of Winter \cite{winter04} to show that low-rank mixed states are also amenable to dimensionality reduction; roughly speaking, $d$-dimensional mixed states of rank $r$ can be embedded into $O(\sqrt{rd})$ dimensions with constant distortion. On the other hand, we show using the 2-norm lower bound that highly mixed states cannot be embedded into low dimension: there is a lower bound of $\Omega(\sqrt{d} \frac{\|\rho-\sigma\|_1}{\|\rho-\sigma\|_2})$ on the target dimension of any constant distortion trace norm embedding that succeeds with constant probability for the pairs $U \rho U^\dag$, $U \sigma U^\dag$ for all unitary operators $U$. In particular, this implies an $\Omega(\sqrt{d})$ lower bound for any embedding which succeeds for a unitarily invariant set of states. In the case that $|\rho-\sigma|$ is proportional to a projector (i.e.\ all non-zero eigenvalues of $\rho-\sigma$ are equal in absolute value), our upper and lower bounds coincide.

Finally, some notes on miscellaneous notation. $F_d$ will denote the unitary operator which swaps (or flips) two $d$-dimensional quantum systems (i.e.\ $F_d = \sum_{i,j=1}^d \ket{i}\bra{j}\otimes \ket{j}\bra{i}$), and $I_d$ will denote the $d$-dimensional identity matrix. Whenever we say that $U \in U(d)$ is a random unitary operator, we mean that $U$ is picked uniformly at random according to Haar measure on the unitary group $U(d)$.


\section{Dimensionality reduction in the 2-norm}
\label{sec:2norm}

We now show that quantum dimensionality reduction in the 2-norm is very limited.

\begin{thm}
\label{thm:nojl}
Let $\mathcal{D}$ be a distribution over quantum channels (CPTP maps) $\mathcal{E} : \mathcal{B}(\C^d) \rightarrow \mathcal{B}(\C^e)$ such that, for fixed quantum states $\rho \neq \sigma$ and for all unitary operators $U \in U(d)$,
\[ \Pr_{\mathcal{E} \sim \mathcal{D}} [ \| \mathcal{E}(U\rho U^\dag) - \mathcal{E}(U \sigma U^\dag)\|_2 \ge (1-\epsilon) \| U \rho U^\dag - U \sigma U^\dag \|_2 ] \ge 1-\delta \]
for some $0 \le \epsilon,\delta \le 1$. Then $e \ge (1-\delta)(1-\epsilon)^2d$.
\end{thm}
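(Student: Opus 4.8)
The plan is to square the hypothesis, average over the Haar measure on $U(d)$, and reduce everything to a single clean statement about one channel. Writing $X = \rho - \sigma$, note $\tr X = 0$ and, since the Schatten $2$-norm is unitarily invariant, $\|U\rho U^\dag - U\sigma U^\dag\|_2 = \|X\|_2$ for every $U$. By linearity of $\mathcal{E}$ the event in the hypothesis is $\|\mathcal{E}(UXU^\dag)\|_2 \ge (1-\epsilon)\|X\|_2$, and squaring a nonnegative quantity converts the probabilistic guarantee into a first-moment bound: for each fixed $U$,
\[ \E_{\mathcal{E} \sim \mathcal{D}} \|\mathcal{E}(UXU^\dag)\|_2^2 \ge (1-\delta)(1-\epsilon)^2 \|X\|_2^2. \]
Averaging over $U$ and exchanging the two expectations, it then suffices to prove the per-channel upper bound $\E_U \|\mathcal{E}(UXU^\dag)\|_2^2 \le \tfrac{e}{d}\|X\|_2^2$ for \emph{every} channel $\mathcal{E}$; combining the two inequalities immediately gives $(1-\delta)(1-\epsilon)^2 \le e/d$, which is the claim.

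To evaluate the left-hand side I would use the swap identity $\|Y\|_2^2 = \tr[(Y\otimes Y)F_e]$ for Hermitian $Y \in \mathcal{B}(e)$, together with the second-moment (Weingarten/Schur--Weyl) formula for the Haar average over $U(d)$. Because $\tr X = 0$, this average collapses to
\[ \E_U\left[ UXU^\dag \otimes UXU^\dag \right] = \frac{\|X\|_2^2}{d^2-1}\left( F_d - \tfrac1d I_{d^2} \right), \]
whence $\E_U \|\mathcal{E}(UXU^\dag)\|_2^2 = \frac{\|X\|_2^2}{d^2-1}\big(\tr[(\mathcal{E}\otimes\mathcal{E})(F_d)F_e] - \tfrac1d\,\|\mathcal{E}(I_d)\|_2^2\big)$. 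Expanding $\mathcal{E}$ in Kraus operators identifies the first trace with $\tr[J^2]$, where $J$ is the (unnormalised) Choi matrix of $\mathcal{E}$ on $\C^e \otimes \C^d$: a positive operator with $\tr J = d$ and, crucially, $\tr_e J = I_d$ by trace preservation. The second term is $\|\mathcal{E}(I_d)\|_2^2 = \|\tr_d J\|_2^2$.

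The main obstacle, and the only place the output dimension $e$ enters, is the bound $\tr[J^2] \le ed$, which I would deduce from the operator-norm estimate $\|J\|_\infty \le e$. To prove the latter, take a unit vector $\ket{\psi}$ with Schmidt decomposition $\ket{\psi} = \sum_{i} \sqrt{p_i}\,\ket{a_i}\ket{b_i}$ across the $\C^e \otimes \C^d$ cut; there are at most $e$ terms and the $\ket{a_i}$ are orthonormal in $\C^e$. Cauchy--Schwarz in $i$ (using $\sum_i p_i = 1$) gives $\bracket{\psi}{J}{\psi} \le \sum_i \bracket{a_i b_i}{J}{a_i b_i} = \sum_i \bra{b_i}\big(\bra{a_i}J\ket{a_i}\big)\ket{b_i}$; since each partial matrix element $\bra{a_i}J\ket{a_i} \succeq 0$ and these sum over a full orthonormal basis of $\C^e$ to $\tr_e J = I_d$, one has $\sum_i \bra{a_i}J\ket{a_i} \preceq I_d$, so each summand is at most $1$ and $\bracket{\psi}{J}{\psi} \le e$. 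Thus $\tr[J^2] \le \|J\|_\infty \tr J \le ed$. Finally $\|\tr_d J\|_2^2 \ge (\tr J)^2/e = d^2/e$, because $\tr_d J$ is an $e$-dimensional positive operator of trace $d$; combining,
\[ \tr[J^2] - \tfrac1d\|\tr_d J\|_2^2 \le ed - \tfrac{d}{e} \le \tfrac{e(d^2-1)}{d}, \]
the last inequality being exactly $e \le d$. This yields $\E_U \|\mathcal{E}(UXU^\dag)\|_2^2 \le \tfrac{e}{d}\|X\|_2^2$ and closes the argument. Both the identity channel (saturating $\tr[J^2]=ed$ at $e=d$) and the trace channel ($e=1$, which annihilates the traceless $X$) confirm that the constants cannot be improved.
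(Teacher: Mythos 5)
Your proof is correct, and at the theorem level it follows the paper's strategy exactly: the same squaring/first-moment reduction over $\mathcal{D}$ and the Haar average, the same twirling identity (Lemma \ref{lem:twirl}), the same subtraction of the $\tfrac{1}{d}\|\mathcal{E}(I_d)\|_2^2 \ge \tfrac{d}{e}$ term, and the same final rearrangement under the harmless assumption $e \le d$; this is precisely Lemma \ref{lem:avg} plus the proof of Theorem \ref{thm:nojl}. Where you genuinely depart from the paper is in the proof of the one inequality through which $e$ enters, namely $\tr[F_e\,\mathcal{E}^{\otimes 2}(F_d)] \le de$ (Lemma \ref{lem:flipbound}). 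The paper proves this by choosing a Kraus decomposition with pairwise orthogonal Kraus operators $A_i$ (which exists by the unitary freedom of such decompositions), reducing the left-hand side to $\sum_i (\tr[A_i^\dag A_i])^2$, and bounding that by $de$. You instead identify the quantity with the purity $\tr[J^2]$ of the unnormalised Choi matrix $J$ --- an identity valid for \emph{any} Kraus decomposition, since $\tr[F_e\,\mathcal{E}^{\otimes 2}(F_d)] = \sum_{i,j}|\tr[A_i^\dag A_j]|^2$ --- and then prove the operator-norm bound $\|J\|_\infty \le e$ from trace preservation ($\tr_e J = I_d$) via a Schmidt decomposition and Cauchy--Schwarz; that step is sound, because each partial matrix element $\bra{a_i}J\ket{a_i}$ is positive semidefinite and these sum to $I_d$ over an orthonormal basis, so each is $\preceq I_d$, giving at most $1$ per Schmidt term and at most $e$ terms. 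What your route buys: it avoids invoking the orthogonal Kraus form, and it isolates a reusable fact ($\|J\|_\infty \le e$, equivalently a bound on the largest eigenvalue of any bipartite state whose $d$-dimensional marginal is maximally mixed) which is saturated not only by the identity channel but by every partial-trace channel, making the tightness of the lemma transparent. What the paper's route buys: granted the orthogonal Kraus form, the computation is shorter, needing only cyclicity of the trace and a sum-times-max estimate. The constants are identical, so the two arguments are interchangeable; your closing remark that the trace channel ($e=1$) shows the subtracted $\tfrac1d\|\mathcal{E}(I_d)\|_2^2$ term cannot be dropped is a sanity check on the exact expression in Lemma \ref{lem:avg}, not a needed part of the argument.
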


Note that the above lower bound on target dimension holds for any embedding of a unitarily invariant set of states. For example, taking $\rho$ and $\sigma$ to be orthogonal pure states and inserting $\epsilon=\delta=0$ recovers the (unsurprising) result that any embedding that exactly preserves distances between all orthogonal pure states with certainty must satisfy $e \ge d$. More generally, if we have an embedding which succeeds with constant probability and has constant distortion, the target dimension can be no smaller than $\Omega(d)$. In order to prove the theorem, we will need the following two technical lemmas, which are proved in Appendix \ref{sec:2normlemmas}.

\newcounter{lems}\setcounter{lems}{\value{thm}}

\begin{lem}
\label{lem:flipbound}
Let $\mathcal{E}:\mathcal{B}(\C^d) \rightarrow \mathcal{B}(\C^e)$ be a quantum channel (CPTP map). Then
\[ \tr[F_e\,\mathcal{E}^{\otimes 2}(F_d)] \le de. \]
\end{lem}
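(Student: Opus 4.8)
The plan is to evaluate the left-hand side explicitly via the ``swap trick'', recognise the result as the purity $\tr[J^2]$ of the Choi matrix $J$ of $\mathcal{E}$, and then control the spectrum of $J$ using complete positivity and trace preservation. First I would expand $F_d = \sum_{i,j}\ket i\bra j\otimes\ket j\bra i$ and use linearity to write $\mathcal{E}^{\otimes 2}(F_d) = \sum_{i,j}\mathcal{E}(\ket i\bra j)\otimes\mathcal{E}(\ket j\bra i)$. Combining this with the identity $\tr[F_e(X\otimes Y)] = \tr[XY]$, valid for any operators $X,Y$ on $\C^e$ and proved in one line by expanding $F_e$, gives
\[ \tr[F_e\,\mathcal{E}^{\otimes 2}(F_d)] = \sum_{i,j}\tr[\mathcal{E}(\ket i\bra j)\,\mathcal{E}(\ket j\bra i)] = \sum_{i,j}\|\mathcal{E}(\ket i\bra j)\|_2^2, \]
where the last step uses $\mathcal{E}(\ket j\bra i) = \mathcal{E}(\ket i\bra j)^\dagger$ (channels preserve adjoints). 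In particular the quantity is manifestly a sum of squared Hilbert--Schmidt norms, hence nonnegative.

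Next I would identify this sum with $\tr[J^2]$, where $J = \sum_{i,j}\ket i\bra j\otimes\mathcal{E}(\ket i\bra j)$ is the Choi matrix of $\mathcal{E}$ on reference $\otimes$ output; a direct expansion of $\tr[J^2]$ reproduces $\sum_{i,j}\|\mathcal{E}(\ket i\bra j)\|_2^2$. Two standard facts then apply: $J\succeq 0$ because $\mathcal{E}$ is completely positive, and $\tr[J] = d$ because $\mathcal{E}$ is trace preserving (indeed $\tr_{\mathrm{out}}[J] = I_d$ on the $d$-dimensional reference system). Since $J$ is positive semidefinite, $\tr[J^2]\le\|J\|_\infty\,\tr[J] = d\,\|J\|_\infty$, so it suffices to prove the bound $\|J\|_\infty\le e$.

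I expect this operator-norm bound to be the main obstacle, since it is the only place where the output dimension $e$ must enter. I would argue as follows. Let $\ket v$ be a unit top eigenvector of $J$ with eigenvalue $\lambda = \|J\|_\infty$; then $J\succeq\lambda\proj v$, and taking the partial trace over the output system gives $I_d = \tr_{\mathrm{out}}[J]\succeq\lambda\,\rho$, where $\rho = \tr_{\mathrm{out}}\proj v$ is a genuine density matrix on the reference. Crucially, $\rho$ is the reduced state of a pure state on a bipartite system whose output factor has dimension $e$, so its Schmidt rank obeys $\rank\rho\le e$, forcing $\|\rho\|_\infty\ge 1/e$. Choosing a unit vector $\ket w$ with $\bra w\rho\ket w\ge 1/e$ and evaluating the inequality $\lambda\rho\preceq I_d$ on $\ket w$ yields $\lambda/e\le 1$, i.e.\ $\lambda\le e$.

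Combining the last two paragraphs gives $\tr[F_e\,\mathcal{E}^{\otimes 2}(F_d)] = \tr[J^2]\le d\,\|J\|_\infty\le de$, as required. The genuinely quantum ingredients are complete positivity (yielding $J\succeq 0$) and the output dimension (entering only through the Schmidt-rank bound $\rank\rho\le e$); the remainder is the swap trick together with elementary spectral estimates. I would present the swap-trick reduction and the Choi identity as routine computations and devote the care to the rank argument bounding $\|J\|_\infty$.
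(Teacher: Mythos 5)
Your proof is correct, and it takes a genuinely different route from the paper's, though the two are dual to one another. The paper works directly with an \emph{orthogonal} Kraus decomposition $\mathcal{E}(\rho)=\sum_i A_i\rho A_i^\dag$ with $\tr[A_i^\dag A_j]=0$ for $i\neq j$ (whose existence it must cite via the unitary freedom in Kraus representations), computes $\tr[F_e\,\mathcal{E}^{\otimes 2}(F_d)]=\sum_i(\tr[A_i^\dag A_i])^2$, and bounds this by $\bigl(\sum_i\tr[A_i^\dag A_i]\bigr)\max_j\tr[A_j^\dag A_j]\le d\cdot e$, using trace preservation for the first factor and $\tr[A_j^\dag A_j]\le\|A_j^\dag A_j\|_\infty\rank(A_j^\dag A_j)\le e$ for the second. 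Under vectorization these are exactly your steps: the spectral decomposition $J=\sum_k\lambda_k\proj{v_k}$ of the Choi matrix \emph{is} an orthogonal Kraus decomposition, with $\lambda_k=\tr[A_k^\dag A_k]$ and with $\rank(A_k)$ equal to the Schmidt rank of $\ket{v_k}$; so your identity $\tr[F_e\,\mathcal{E}^{\otimes 2}(F_d)]=\tr[J^2]$, your fact $\tr[J]=d$, and your bound $\|J\|_\infty\le e$ mirror the paper's three estimates one for one, and both arguments are tight in the same place (the identity channel, $e=d$). What your version buys: the spectral theorem applied to $J\succeq 0$ silently does the work of the canonical-Kraus-form citation, and it isolates the clean, reusable intermediate fact that the Choi matrix of any channel into $\C^e$ has operator norm at most $e$ --- your derivation of this (partial trace preserves the semidefinite order, so $I_d\succeq\lambda\rho$ with $\rank\rho\le e$, hence $\lambda\le e$) is sound. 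What the paper's version buys: it never introduces the reference system or the Choi operator, so the whole computation stays at the level of the channel's action and is slightly more self-contained for a reader who knows Kraus representations but not the Choi--Jamio{\l}kowski correspondence.
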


\begin{lem}
\label{lem:twirl}
Let $\rho$ and $\sigma$ be $d$-dimensional quantum states. Then
\[ \int U^{\otimes 2} (\rho - \sigma)^{\otimes 2} (U^{\dag})^{\otimes 2} dU = \frac{\|\rho-\sigma\|_2^2}{d^2-1} \left( F_d - \frac{I_{d^2}}{d} \right). \]
\end{lem}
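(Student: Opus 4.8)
The plan is to prove Lemma~\ref{lem:twirl}, which computes the Haar average of $U^{\otimes 2} (\rho-\sigma)^{\otimes 2} (U^\dagger)^{\otimes 2}$. The key tool is Schur–Weyl duality: the twirling map $X \mapsto \int U^{\otimes 2} X (U^\dagger)^{\otimes 2}\, dU$ projects any operator on $(\C^d)^{\otimes 2}$ onto the commutant of $U^{\otimes 2}$, which by Schur–Weyl is spanned by the identity $I_{d^2}$ and the swap operator $F_d$. So the answer is automatically of the form $\alpha I_{d^2} + \beta F_d$ for some scalars $\alpha, \beta$ depending only on the input $X = (\rho-\sigma)^{\otimes 2}$; the whole content of the lemma is pinning down these two scalars.

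Let me set $M = \rho - \sigma$, a traceless Hermitian operator since $\rho,\sigma$ are both trace-one. **First I would** write the twirl as $\alpha I_{d^2} + \beta F_d$ and recover $\alpha,\beta$ by taking two independent traces against $I_{d^2}$ and $F_d$. The left-hand side is unchanged under trace against a $U^{\otimes 2}$-invariant operator, so $\tr[I_{d^2} \cdot \mathrm{LHS}] = \tr[M^{\otimes 2}] = (\tr M)^2 = 0$, using that the twirl is trace-preserving and $\tr M = 0$. Likewise $\tr[F_d \cdot \mathrm{LHS}] = \tr[F_d\, M^{\otimes 2}] = \tr[M^2] = \|M\|_2^2$, where I use the standard swap identity $\tr[F_d (A\otimes B)] = \tr[AB]$. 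On the right-hand side I compute the two ``Gram'' quantities $\tr[I_{d^2}] = d^2$, $\tr[F_d] = d$, and $\tr[F_d^2] = \tr[I_{d^2}] = d^2$, which give the linear system
\[
\alpha d^2 + \beta d = 0, \qquad \alpha d + \beta d^2 = \|M\|_2^2.
\]
Solving yields $\beta = \dfrac{\|M\|_2^2}{d^2-1}$ and $\alpha = -\dfrac{\|M\|_2^2}{d(d^2-1)}$, which is exactly $\dfrac{\|M\|_2^2}{d^2-1}\bigl(F_d - \tfrac{1}{d} I_{d^2}\bigr)$ as claimed.

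**The only real obstacle** is justifying the structural step — that the twirled operator lies in $\mathrm{span}\{I_{d^2}, F_d\}$ — rather than the arithmetic, which is routine. This follows because the averaged operator commutes with $V^{\otimes 2}$ for every $V \in U(d)$ (by invariance of Haar measure under left translation), and Schur–Weyl duality identifies the commutant of $\{V^{\otimes 2}\}$ as the algebra generated by the symmetric group $S_2$ acting by permutations, i.e.\ $\mathrm{span}\{I_{d^2}, F_d\}$. I would state this explicitly and cite it as the standard representation-theoretic fact it is; with that in hand, the two-equation linear system above finishes the proof.
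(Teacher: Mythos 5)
Your proof is correct and follows essentially the same route as the paper's: both invoke the representation-theoretic fact that the twirled operator lies in $\mathrm{span}\{I_{d^2}, F_d\}$ (the paper cites Goodman--Wallach where you cite Schur--Weyl duality, the same fact), then determine the coefficients from the identical two trace equations $\alpha d^2 + \beta d = 0$ and $\alpha d + \beta d^2 = \tr[(\rho-\sigma)^2]$. Nothing further is needed.
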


The following lemma is the key to most of the results in this paper.

\begin{lem}
\label{lem:avg}
Let $\rho$ and $\sigma$ be quantum states and let $\mathcal{E}: \mathcal{B}(\C^d) \rightarrow \mathcal{B}(\C^e)$ be a quantum channel. Then
\[ \int \| \mathcal{E}(U\rho U^\dag) - \mathcal{E}(U\sigma U^\dag)\|_2^2\,dU \le \frac{d(e^2-1)}{e(d^2-1)} \|\rho-\sigma\|_2^2. \]
\end{lem}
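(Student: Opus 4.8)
The plan is to reduce the averaged squared 2-norm to a single trace expression involving the swap operators, and then apply the two preceding technical lemmas. First I would expand the squared Schatten 2-norm as an inner product: for any Hermitian operator $X$ we have $\|X\|_2^2 = \tr[X^2]$, so writing $X = \mathcal{E}(U\rho U^\dag) - \mathcal{E}(U\sigma U^\dag) = \mathcal{E}(U(\rho-\sigma)U^\dag)$ (using linearity of the channel) gives $\|X\|_2^2 = \tr[X \otimes X \cdot F_e]$, since $\tr[(A\otimes B)F_e] = \tr[AB]$ for the flip operator. Thus the integrand becomes $\tr[F_e\,\mathcal{E}^{\otimes 2}(U^{\otimes 2}(\rho-\sigma)^{\otimes 2}(U^\dag)^{\otimes 2})]$.

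Next I would push the Haar integral inside the trace. Because $\mathcal{E}^{\otimes 2}$ and $F_e$ do not depend on $U$, and integration is linear, the average equals
\[ \int \|X\|_2^2\,dU = \tr\!\left[F_e\,\mathcal{E}^{\otimes 2}\!\left( \int U^{\otimes 2}(\rho-\sigma)^{\otimes 2}(U^\dag)^{\otimes 2}\,dU \right)\right]. \]
At this point Lemma \ref{lem:twirl} evaluates the inner integral exactly, replacing it by $\frac{\|\rho-\sigma\|_2^2}{d^2-1}\left(F_d - \frac{I_{d^2}}{d}\right)$. Pulling out the scalar and using linearity of $\mathcal{E}^{\otimes 2}$, the average becomes
\[ \frac{\|\rho-\sigma\|_2^2}{d^2-1}\left( \tr[F_e\,\mathcal{E}^{\otimes 2}(F_d)] - \frac{1}{d}\tr[F_e\,\mathcal{E}^{\otimes 2}(I_{d^2})]\right). \]

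The final step is to bound the two trace terms. The first is handled directly by Lemma \ref{lem:flipbound}, giving $\tr[F_e\,\mathcal{E}^{\otimes 2}(F_d)] \le de$. For the second term I would use that $\mathcal{E}$ is trace preserving, so $\mathcal{E}^{\otimes 2}(I_{d^2}) = \mathcal{E}(I_d)\otimes\mathcal{E}(I_d)$, and since $\tr[\mathcal{E}(I_d)] = \tr[I_d] = d$ and $\tr[F_e(A\otimes B)] = \tr[AB]$, one gets $\tr[F_e\,\mathcal{E}^{\otimes 2}(I_{d^2})] = \tr[\mathcal{E}(I_d)^2] = \|\mathcal{E}(I_d)\|_2^2 \ge \frac{(\tr\mathcal{E}(I_d))^2}{e} = \frac{d^2}{e}$ by Cauchy-Schwarz (the rank of $\mathcal{E}(I_d)$ is at most $e$). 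Substituting these bounds yields
\[ \int \|X\|_2^2\,dU \le \frac{\|\rho-\sigma\|_2^2}{d^2-1}\left( de - \frac{d}{e}\right) = \frac{d(e^2-1)}{e(d^2-1)}\|\rho-\sigma\|_2^2, \]
as claimed. The main obstacle I anticipate is getting the direction of the inequality right in the second term: because it enters with a minus sign, I need a \emph{lower} bound on $\tr[\mathcal{E}^{\otimes 2}(I_{d^2})F_e]$ rather than an upper bound, which is exactly why the Cauchy-Schwarz estimate $\|\mathcal{E}(I_d)\|_2^2 \ge (\tr\mathcal{E}(I_d))^2/e$ is needed in place of a more naive bound.
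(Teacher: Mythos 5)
Your proof is correct and follows essentially the same route as the paper's: the flip-operator identity $\tr[X^2]=\tr[F_e X^{\otimes 2}]$, pushing the Haar integral inside and applying Lemma~\ref{lem:twirl}, bounding the first trace by Lemma~\ref{lem:flipbound}, and lower-bounding the subtracted term via $\tr[\mathcal{E}(I_d)^2]\ge (\tr \mathcal{E}(I_d))^2/e = d^2/e$, which is exactly the paper's step $\tr\rho^2\ge 1/e$ applied to the state $\mathcal{E}(I_d/d)$. One trivial correction: the factorisation $\mathcal{E}^{\otimes 2}(I_{d^2})=\mathcal{E}(I_d)\otimes\mathcal{E}(I_d)$ holds by definition of the tensor-product map rather than by trace preservation --- trace preservation is what you need (and correctly use) to conclude $\tr[\mathcal{E}(I_d)]=d$.
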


\begin{proof}
We have
\beas
\int \| \mathcal{E}(U\rho U^\dag) - \mathcal{E}(U\sigma U^\dag)\|_2^2\,dU &=& \int \| \mathcal{E}(U(\rho -\sigma)U^\dag)\|_2^2\,dU\\
&=& \int\tr[F_e\,\mathcal{E}(U(\rho -\sigma)U^\dag)^{\otimes 2}]\,dU\\
&=& \tr\left[F_e\,\mathcal{E}^{\otimes 2}\left( \int U^{\otimes 2}(\rho -\sigma)^{\otimes 2}(U^\dag)^{\otimes 2}\, dU \right)\right]\\
&=& \frac{\|\rho-\sigma\|_2^2}{d^2-1} \tr\left[F_e\,\mathcal{E}^{\otimes 2} \left( F_d - \frac{I_{d^2}}{d} \right)\right]\\
&\le& \frac{\|\rho-\sigma\|_2^2}{d^2-1} \left( de - d \tr[\mathcal{E}(I_d/d)^2] \right)\\
&\le& \frac{d(e^2-1)}{e(d^2-1)} \|\rho-\sigma\|_2^2.
\eeas
We use linearity of $\mathcal{E}$ in the first equality, and the second equality is the tensor product trick $\tr[X^2] = \tr[F_e X^{\otimes 2}]$ for $e$-dimensional operators $X$. The fourth equality is Lemma \ref{lem:twirl}, the first inequality is Lemma \ref{lem:flipbound}, and the second inequality is simply $\tr \rho^2 \ge 1/e$ for all $e$-dimensional states $\rho$.
\end{proof}

We are finally ready to prove Theorem \ref{thm:nojl}.

\begin{proof}[Proof of Theorem \ref{thm:nojl}]
We will prove something slightly stronger: that for a random $U$, the 2-norm is not approximately preserved under a map $\mathcal{E}$ picked from $\mathcal{D}$, unless $e$ is almost as large as $d$. So assume
\[ \Pr_{\mathcal{E} \sim \mathcal{D},\,U \in U(d)} \left[ \| \mathcal{E}(U \rho U^\dag) - \mathcal{E}(U \sigma U^\dag)\|_2 \ge (1-\epsilon) \| U\rho U^\dag - U\sigma U^\dag \|_2 \right] \ge 1-\delta, \]
or equivalently
\[ \Pr_{\mathcal{E} \sim \mathcal{D},\,U \in U(d)} \left[ \| \mathcal{E}(U\rho U^\dag) - \mathcal{E}(U \sigma U^\dag)\|_2^2 \ge (1-\epsilon)^2 \| \rho - \sigma\|_2^2 \right] \ge 1-\delta, \]
where we use the unitary invariance of the 2-norm. By Markov's inequality, this implies that
\[ 
\int_{\mathcal{E} \sim \mathcal{D}} \int \| \mathcal{E}(U \rho U^\dag) - \mathcal{E}(U \sigma U^\dag)\|_2^2\,dU \ge (1-\delta) (1-\epsilon)^2\| \rho - \sigma\|_2^2, \]
implying in turn that there must exist some $\mathcal{E}$ such that
\[ \int \| \mathcal{E}(U \rho U^\dag) - \mathcal{E}(U \sigma U^\dag)\|_2^2\,dU \ge (1-\delta)(1-\epsilon)^2 \| \rho - \sigma\|_2^2. \]
So let $\mathcal{E}: \mathcal{B}(\C^d) \rightarrow \mathcal{B}(\C^e)$ be a quantum channel that does satisfy this inequality. Then we have
\beas
(1-\delta)(1-\epsilon)^2 \| \rho - \sigma\|_2^2 &\le& \int \| \mathcal{E}(U\rho U^\dag) - \mathcal{E}(U \sigma U^\dag) \|_2^2\,dU
\le \left( \frac{e}{d}\right) \|\rho - \sigma\|_2^2,
\eeas
where the second inequality follows from Lemma \ref{lem:avg}, assuming that $e \le d$. We have shown that $e \ge (1-\delta)(1-\epsilon)^2d$, completing the proof of the theorem.
\end{proof}


\section{Operational meaning of the 2-norm}
\label{sec:interpret}

In this section, we discuss the meaning of the 2-norm distance between quantum states. It is usually assumed that the trace norm is the ``right'' measure of distance between states, and proofs going via the 2-norm usually do so only for calculational simplicity. However, here we argue that the 2-norm is of interest in its own right, by giving two operational interpretations of this distance measure.


\subsection{Equality testing without a reference frame}

Consider the following equality-testing game. We are given a description of two different states $\rho$ and $\sigma$. An adversary prepares two systems in one of the states $\rho \otimes \rho$, $\sigma \otimes \sigma$, $\rho \otimes \sigma$ or $\sigma \otimes \rho$, with equal probability of each. He then applies an unknown unitary $U$ to each system (i.e.\ he applies $U \otimes U$ to the joint state). Our task is to determine whether the two systems have the same state or different states. This models equality testing in a two-party scenario in which the preparer and tester do not share a reference frame \cite{bartlett03}. One protocol for solving this task is simply to apply the swap test \cite{buhrman01} to the two states we are given, output ``same'' if the test accepts, and ``different'' otherwise. When applied to two states $\rho$, $\sigma$ this test accepts with probability $\frac{1}{2} + \frac{1}{2} \tr \rho\,\sigma$, so for any $U$ the overall probability of success is
\[ \frac{1}{4} \left( \frac{1}{2} + \frac{1}{2} \tr[\rho^2] \right) + \frac{1}{4} \left( \frac{1}{2} + \frac{1}{2} \tr [\sigma^2] \right) + \frac{1}{2}\left( \frac{1}{2} - \frac{1}{2} \tr[ \rho\,\sigma ] \right) = \frac{1}{2} + \frac{1}{8} \| \rho - \sigma \|_2^2. \]
Using our previous result, we now show that this is optimal.
\begin{thm}
The maximal probability of success of the above game is $\frac{1}{2} + \frac{1}{8} \| \rho - \sigma \|_2^2$.
\end{thm}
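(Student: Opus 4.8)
The plan is to prove optimality by showing that no protocol can exceed $\frac{1}{2} + \frac{1}{8}\|\rho-\sigma\|_2^2$, matching the swap-test protocol already exhibited. The key conceptual point is that, because the adversary applies an unknown $U \otimes U$, any measurement strategy we use is effectively averaged over the unitary group, and this is exactly the setting in which Lemma \ref{lem:twirl} applies. So I would first reformulate the game: the most general protocol is a two-outcome POVM $(M, I-M)$ on the joint system, where outcome $M$ corresponds to guessing ``same.'' The success probability, for a fixed $U$, is a linear functional of $U^{\otimes 2}(\cdot)(U^\dagger)^{\otimes 2}$ applied to the four equally-likely input states. Since the adversary's $U$ is unknown and we must succeed against all of them (or, for the upper bound, it suffices to consider $U$ drawn from Haar measure), I would average the success probability over $U \in U(d)$.

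Next I would carry out the averaging explicitly. Writing the success probability as
\bes
p_{\text{succ}}(U) = \frac{1}{2} + \frac{1}{4}\tr\!\left[M\,U^{\otimes 2}\big(\rho\otimes\rho + \sigma\otimes\sigma - \rho\otimes\sigma - \sigma\otimes\rho\big)(U^\dagger)^{\otimes 2}\right]
\ees
up to the correct sign bookkeeping (the ``same'' inputs should push toward $M$ and the ``different'' inputs away), the bracketed operator is exactly $(\rho-\sigma)^{\otimes 2}$. Averaging over $U$ and invoking Lemma \ref{lem:twirl} gives
\bes
\int p_{\text{succ}}(U)\,dU = \frac{1}{2} + \frac{1}{4}\cdot\frac{\|\rho-\sigma\|_2^2}{d^2-1}\,\tr\!\left[M\left(F_d - \frac{I_{d^2}}{d}\right)\right].
\ees
The optimization then reduces to a clean finite-dimensional problem: maximize $\tr[M(F_d - I_{d^2}/d)]$ over all operators $0 \le M \le I_{d^2}$.

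The main obstacle — really the only substantive step — is evaluating this last optimization and checking that it yields precisely the constant $\frac{1}{8}\|\rho-\sigma\|_2^2$, i.e.\ that the maximum of $\tr[M(F_d - I_{d^2}/d)]$ equals $\frac{1}{2}(d^2-1)/d$ or whatever normalization makes the arithmetic close. The maximizing $M$ is the projector onto the eigenspaces of $F_d - I_{d^2}/d$ with positive eigenvalue; since $F_d$ has eigenvalues $\pm 1$ (symmetric and antisymmetric subspaces), $F_d - I_{d^2}/d$ has eigenvalues $1 - 1/d$ (dimension $\binom{d+1}{2}$) and $-1-1/d$ (dimension $\binom{d}{2}$), so the optimal $M$ projects onto the symmetric subspace and contributes $(1-1/d)\binom{d+1}{2}$. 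I would then verify that this matches the swap-test value, which is reassuring since the swap test \emph{is} the projection onto the symmetric subspace; this confirms the swap test is optimal and that no rescaling of the constant is needed. Finally, since the Haar-averaged success probability cannot exceed the worst-case-over-$U$ success probability that any valid protocol must meet, the bound on the average furnishes the desired upper bound, completing the proof.
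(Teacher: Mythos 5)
Your proposal is correct and follows essentially the same route as the paper's own proof: the same POVM formulation of the bias, the same Haar-averaging reduction via Lemma \ref{lem:twirl}, the same spectral computation showing the optimal $M$ is the projector onto the symmetric subspace with $\tr[M(F_d - I_{d^2}/d)] = (d^2-1)/2$, and the same average-versus-worst-case argument. One small slip in your final sentence: the inequality runs the other way --- it is the worst-case (over $U$) success probability that cannot exceed the Haar average --- which is precisely why your bound on the average yields the desired bound on the value of the game.
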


\begin{proof}
Let $(M,I-M)$ be an arbitrary POVM where the operator $M$ corresponds to the answer ``same''. Then the probability of success achieved by this POVM for a given $U$ is $\frac{1}{2} + \frac{1}{2}B$, where $B$ is the {\em bias}, which is equal to
\[ \tr\left[ M\left( \frac{1}{2}(U \rho U^\dag \otimes U \rho U^\dag + U \sigma U^\dag \otimes U \sigma U^\dag) - \frac{1}{2}(U \rho U^\dag \otimes U \sigma U^\dag + U \sigma U^\dag \otimes U \rho U^\dag) \right)\right]. \]
If the adversary adopts the strategy of picking $U$ uniformly at random, the average bias obtained is
\[ \frac{1}{2} \tr\left[ M \int U^{\otimes 2}(\rho \otimes \rho + \sigma \otimes \sigma - \rho \otimes \sigma - \sigma \otimes \rho)(U^{\dag})^{\otimes 2} dU \right] = \frac{1}{2} \tr\left[ M \int U^{\otimes 2}(\rho - \sigma)^{\otimes 2} (U^\dag)^{\otimes 2}\right], \]
which by Lemma \ref{lem:twirl} is equal to
\[ \frac{\|\rho-\sigma\|_2^2}{2(d^2-1)} \tr \left[M\left( F_d - \frac{I_{d^2}}{d} \right)\right]. \]
This expression is maximised by setting $M$ equal to a projector onto the subspace spanned by the eigenvectors of $F_d - \frac{I_{d^2}}{d}$ with positive eigenvalues. As $F_d$ has $d(d+1)/2$ eigenvalues equal to 1, and $d(d-1)/2$ eigenvalues equal to $-1$, we obtain $\tr \left[M\left( F_d - \frac{I_{d^2}}{d} \right)\right] = (d^2-1)/2$. This implies that the average bias is at most $\frac{1}{4} \| \rho - \sigma \|_2^2$. As the worst-case bias can only be lower, this implies the claimed result.
\end{proof}


\subsection{Performing a random measurement}

The second game we will discuss is state discrimination with a fixed or random measurement. Imagine we are given a state which is promised to be either $\rho$ or $\sigma$, with equal probability of each, and we wish to determine which is the case. It is well-known that the largest bias achievable by choosing an appropriate measurement is $\frac{1}{2} \|\rho-\sigma\|_1$ (recall from the previous section that the bias $B$ and the success probability $p$ have the relationship $p = \frac{1}{2} + \frac{B}{2}$). But how well can we do if the measurement we apply does not in fact depend on $\rho$ and $\sigma$?

We will see that $\|\rho - \sigma\|_2$ is closely related to the optimal bias achievable by performing one of the following two measurements, and deciding whether the state is $\rho$ or $\sigma$ based on the outcome.
\begin{itemize}
\item The uniform (isotropic) POVM whose measurement elements consist of normalised projectors onto all states $\ket{\psi}$;
\item A projective measurement in a random basis (i.e.\ applying a random unitary operator and measuring in the computational basis).
\end{itemize}
In general, the largest bias achievable by measuring a POVM $M$ which consists of measurement operators $M_i$ can be written as
\[ \frac{1}{2} \sum_i |\tr[M_i (\rho-\sigma)]|. \]
Each measurement operator of the uniform POVM is given by the projector onto some state $\ket{\psi}$, normalised by a factor of $d$ (to check that this is right, note that
\[ d \int d\psi \proj{\psi} = d \left( \frac{I_d}{d} \right) = I_d \]
as expected). So the bias induced by the uniform POVM is
\[ \frac{d}{2} \int d\psi |\bracket{\psi}{(\rho-\sigma)}{\psi}|. \]
In the case of a measurement in a random basis $U \in U(d)$, we can calculate the {\em expected} bias as follows:
\beas
\frac{1}{2} \E_U \sum_{i=1}^d |\bracket{i}{U^{\dag}(\rho-\sigma)U}{i}| &=& \frac{1}{2} \sum_{i=1}^d \E_U |\bracket{i}{U^{\dag}(\rho-\sigma)U}{i}| = \frac{1}{2} \sum_{i=1}^d \E_U |\bracket{1}{U^{\dag}(\rho-\sigma)U}{1}|\\
&=& \frac{d}{2} \int d\psi |\bracket{\psi}{(\rho-\sigma)}{\psi}|;
\eeas
so these quantities are the same. They are also closely related to the 2-norm distance, as we will now see.

\newcounter{thm2}\setcounter{thm2}{\value{thm}}

\begin{thm}
\label{thm:2norm}
Let $\rho$, $\sigma$ be $d$-dimensional quantum states. Then
\[ \frac{1}{3} \|\rho-\sigma\|_2 \le d \int d\psi |\bracket{\psi}{(\rho-\sigma)}{\psi}| \le \|\rho-\sigma\|_2. \]
\end{thm}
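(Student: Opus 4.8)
The plan is to set $X = \rho - \sigma$, a traceless Hermitian operator, and to study the real random variable $Y = \bracket{\psi}{X}{\psi}$ for Haar-random $\ket\psi \in \C^d$. The whole theorem then reduces to comparing the first absolute moment $\E|Y| = \int d\psi\, |\bracket{\psi}{X}{\psi}|$ against $\|X\|_2$. The essential ingredients are the low-order moments of $Y$, obtained from the standard symmetric-subspace integrals $\int d\psi\, \proj{\psi}^{\otimes k} = P_{\mathrm{sym}}^{(k)}/\binom{d+k-1}{k}$, together with the fact that $\tr X = 0$.

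For the upper bound I would first compute the second moment. Writing $Y^2 = \tr[X^{\otimes 2}\proj{\psi}^{\otimes 2}]$ and integrating gives $\E[Y^2] = \frac{1}{d(d+1)}\tr[X^{\otimes 2}(I_{d^2}+F_d)] = \frac{(\tr X)^2 + \tr[X^2]}{d(d+1)}$; since $\tr X = 0$ this is exactly $\|X\|_2^2/(d(d+1))$. Then Cauchy--Schwarz (Jensen) gives $d\,\E|Y| \le d\sqrt{\E[Y^2]} = \sqrt{d/(d+1)}\,\|X\|_2 \le \|X\|_2$, which is the right-hand inequality.

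The harder direction is the lower bound, and the key idea is an interpolation inequality relating the first, second and fourth moments: writing $Y^2 = |Y|^{2/3}|Y|^{4/3}$ and applying H\"older with exponents $3/2$ and $3$ yields $\E[Y^2] \le (\E|Y|)^{2/3}(\E[Y^4])^{1/3}$, i.e.\ $\E|Y| \ge (\E[Y^2])^{3/2}/(\E[Y^4])^{1/2}$. So I need an upper bound on the fourth moment. Here I would use $\E[Y^4] = \frac{1}{4!\binom{d+3}{4}}\sum_{\pi\in S_4}\tr[X^{\otimes 4}W_\pi]$, where $W_\pi$ is the permutation operator and $\tr[X^{\otimes 4}W_\pi] = \prod_{\text{cycles } c}\tr[X^{|c|}]$. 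Because $\tr X = 0$, every permutation with a fixed point drops out, leaving only the three permutations of cycle type $(2,2)$ (each contributing $\tr[X^2]^2 = \|X\|_2^4$) and the six $4$-cycles (each contributing $\tr[X^4] \le \|X\|_2^4$). Using $4!\binom{d+3}{4} = d(d+1)(d+2)(d+3)$ then gives $\E[Y^4] \le 9\|X\|_2^4/(d(d+1)(d+2)(d+3))$.

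Combining the moment estimates, $d\,\E|Y| \ge \frac{\|X\|_2}{3}\cdot\frac{\sqrt{(d+2)(d+3)}}{d+1} \ge \frac{1}{3}\|X\|_2$, the last step using $(d+2)(d+3)\ge(d+1)^2$. The main obstacle is the fourth-moment computation: one must rederive the fourth Haar moment via the degree-$4$ symmetric-subspace projector and carefully track which permutations survive tracelessness, after which the bound $\tr[X^4]\le(\tr[X^2])^2$ is routine. The constant $1/3$ falls out of this calculation and is presumably not tight.
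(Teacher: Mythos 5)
Your proposal is correct and follows essentially the same route as the paper's proof: the second moment via the symmetric-subspace projector plus Jensen for the upper bound, and the fourth-moment (Berger) inequality combined with the $S_4$ permutation-cycle computation of $\int d\psi\,\bracket{\psi}{(\rho-\sigma)}{\psi}^4$ for the lower bound. Your derivation of the moment inequality directly from H\"older is exactly what the paper means when it calls Berger's method ``H\"older's inequality in disguise,'' so there is no substantive difference.
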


The lower bound in Theorem \ref{thm:2norm} was shown by Ambainis and Emerson~\cite{ambainis07b} (see also the proof of Matthews, Wehner and Winter \cite{matthews09a}), and the upper bound is not hard. However, as this result does not appear to be widely known, we include a proof (which is essentially the same as that of~\cite{matthews09a}) in Appendix \ref{sec:isoproof}.

In fact, the corresponding upper and lower bounds on the bias hold for any fixed POVM whose measurement vectors form a 4-design~\cite{ambainis07b}, and the upper bound even holds for any fixed POVM whose vectors form a 2-design. This result can be useful in cases where one wishes to perform state discrimination without necessarily being able to construct the optimal measurement efficiently \cite{sen06}. See the work \cite{matthews09a} for much more detail on the bias achievable in state discrimination with fixed measurements.


\section{Dimensionality reduction in the trace norm}
\label{sec:1norm}

In this section we consider embeddings that reduce dimension while preserving the trace norm distance between states. As no quantum channel can increase this distance, we first observe that any such embedding will automatically be contractive.


\subsection{Upper bound}
\label{sec:trupper}

It was previously shown by Winter \cite{winter04} that, in our language, $d$-dimensional pure states can be embedded into $\mathcal{B}(O(\sqrt{d}))$ with constant distortion. We now extend this result to general mixed states, by showing that rank $r$ mixed states can be embedded into dimension $O(\sqrt{rd})$ with constant distortion.

The embedding is conceptually very simple: apply a random unitary and trace out a subsystem. However, when the target dimension $e$ does not divide $d$, we are forced to consider random isometries $V:\C^d \rightarrow \C^e \otimes \C^{\lceil d/e \rceil}$ instead of unitaries, where $\lceil x \rceil$ is the smallest integer $y$ such that $y \ge x$. Recall that an isometry is a norm-preserving linear map, i.e.\ a map taking an orthonormal basis of one space to an orthonormal set of vectors in another (potentially larger) space. A random isometry is defined as a fixed isometry followed by a random unitary.

Formally, our embedding is a distribution over the following quantum channels $\mathcal{E}_V$.

\begin{dfn}
\label{dfn:channel}
Let $d$ and $e$ be positive integers such that $e \le d$. For any isometry $V:\C^d \rightarrow \C^e \otimes \C^{\lceil d/e \rceil}$, let $\mathcal{E}_V : \mathcal{B}(\C^d) \rightarrow \mathcal{B}(\C^e)$ be the quantum channel that consists of performing $V$, then tracing out (discarding) the second subsystem.
\end{dfn}

We now analyse the performance of the embedding obtained by picking a random $V$ and applying this channel.

\begin{thm}
\label{thm:trupper}
Let $d$ be a positive integer, and let $\rho$ and $\sigma$ be arbitrary $d$-dimensional mixed states such that $\rho$ has rank $r$. Fix $\epsilon$ such that $0 < \epsilon < 1$. For any $e$ such that $2\sqrt{rd/\epsilon} \le e \le d$, let $\mathcal{D}$ be the distribution on channels $\mathcal{E}_V : \mathcal{B}(\C^d) \rightarrow \mathcal{B}(\C^e)$ that is uniform on isometries $V:\C^d \rightarrow \C^e \otimes \C^{\lceil d/e \rceil}$. Then
\[ \Pr_{\mathcal{E}_V \sim \mathcal{D}} [ \| \mathcal{E}_V(\rho) - \mathcal{E}_V(\sigma)\|_1 \ge (1-\epsilon)\| \rho - \sigma \|_1] \ge 1 - d\,\exp(-K\epsilon d),
\]
for a universal constant $K$ which may be taken to be $(1-\ln 2)/(2 \ln 2) \approx 0.22$.
\end{thm}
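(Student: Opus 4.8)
My plan is to reduce the trace-norm statement to a statement about a single, cleverly chosen two-outcome measurement on $\C^e$, and then control that measurement by a first-moment computation followed by concentration of measure. Write $\Delta = \rho - \sigma$ and decompose it into positive and negative parts $\Delta = \Delta_+ - \Delta_-$, so $\|\Delta\|_1 = 2\tr\Delta_+ = 2\tr\Delta_-$. The first key observation is that $\rank\Delta_+ \le r$: on $\ker\rho$ (of dimension $d-r$) we have $\langle v|\Delta|v\rangle = -\langle v|\sigma|v\rangle \le 0$, so $\Delta$ is negative semidefinite on a subspace of dimension $d-r$ and hence has at most $r$ positive eigenvalues. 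Since $\mathcal{E}_V$ is positive and trace-preserving, $\mathcal{E}_V(\Delta_\pm)\ge 0$ with $\tr\mathcal{E}_V(\Delta_\pm) = \tr\Delta_\pm$, and for any projector $N$ the duality $\|\mathcal{E}_V(\Delta)\|_1 \ge 2\tr[N\mathcal{E}_V(\Delta)]$ holds because $\mathcal{E}_V(\Delta)$ is traceless. So it suffices to exhibit, with high probability over $V$, a projector $N$ making $\tr[N\mathcal{E}_V(\Delta)]$ close to $\tfrac12\|\Delta\|_1$.

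The natural choice is to let $N = N_V$ be the projector onto the range of $\mathcal{E}_V(\Delta_+) = \tr_2(V\Delta_+ V^\dagger)$. This forces $\tr[N_V\mathcal{E}_V(\Delta_+)] = \tr\Delta_+$ automatically for every $V$, and simultaneously controls the rank of $N_V$: the partial trace of the rank-$\le r$ operator $V\Delta_+ V^\dagger$ has rank at most $r\lceil d/e\rceil$, since each pure component has Schmidt rank at most $m := \lceil d/e\rceil$ across the $e{:}m$ cut (and one checks $m\le e$ under the hypotheses). With this choice $\|\mathcal{E}_V(\Delta)\|_1 \ge \|\Delta\|_1 - 2\tr[N_V\mathcal{E}_V(\Delta_-)]$, so the whole theorem reduces to showing the cross term satisfies $\tr[N_V\mathcal{E}_V(\Delta_-)] \le \epsilon\tr\Delta_-$ with probability at least $1 - d\exp(-K\epsilon d)$.

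To estimate this cross term I would first compute its expectation, exploiting that $\supp\Delta_+$ and $\supp\Delta_-$ are orthogonal subspaces of $\C^d$. Conditioning on the action of $V$ on $\supp\Delta_+$ fixes $N_V$, while on the orthogonal subspace $\supp\Delta_-$ the map $V$ remains a Haar-random isometry into the orthogonal complement of $V(\supp\Delta_+)$, of dimension $em - s$ with $s=\rank\Delta_+$. Writing $\Delta_- = \sum_j \mu_j \proj{b_j}$, one has $\tr[N_V\mathcal{E}_V(\Delta_-)] = \sum_j \mu_j\langle Vb_j|(N_V\otimes I_m)|Vb_j\rangle$, and for a uniformly random image vector $\E\langle Vb_j|(N_V\otimes I_m)|Vb_j\rangle \le \rank(N_V)\,m/(em-s)$. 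Summing against the $\mu_j$ gives $\E\,\tr[N_V\mathcal{E}_V(\Delta_-)] \lesssim (rm/e)\tr\Delta_- \le (2rd/e^2)\tr\Delta_-$; the hypothesis $e\ge 2\sqrt{rd/\epsilon}$ is precisely the statement that $2rd/e^2 \le \epsilon/2$, so the expected cross term is at most $\tfrac{\epsilon}{2}\tr\Delta_-$.

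The remaining and main obstacle is upgrading this first-moment bound to a high-probability bound with the stated failure probability $d\exp(-K\epsilon d)$. Two points make this delicate. First, $N_V$ is a discontinuous function of $V$ (its range can jump as eigenvalues collide), so concentration cannot be applied to the full expression naively; the orthogonal-support conditioning above is what rescues the argument, since once the data fixing $N_V$ is frozen the remaining randomness enters only through the Lipschitz map $V|_{\supp\Delta_-}\mapsto \sum_j\mu_j\langle Vb_j|(N_V\otimes I_m)|Vb_j\rangle$, to which Levy's lemma on the unitary group applies. Second, a generic Lipschitz/Levy estimate produces a Gaussian tail $\exp(-\Omega(\epsilon^2 d))$, whereas the claimed exponent is linear in $\epsilon$; obtaining the sharp constant $K=(1-\ln 2)/(2\ln 2)$ will require a more careful large-deviation estimate for the reduced state of a random induced ensemble, combined with a net over the relevant directions that contributes the polynomial $d$ prefactor. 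I expect essentially all of the real work to lie in this sharp concentration step, rather than in the reduction or the first-moment computation.
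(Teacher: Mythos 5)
Your reduction coincides exactly with the paper's: the same positive/negative decomposition of $\Delta = \rho - \sigma$, the same duality $\|\mathcal{E}_V(\Delta)\|_1 \ge 2\tr[N\,\mathcal{E}_V(\Delta)]$, and your $N_V$ (projector onto the range of $\tr_2(V\Delta_+V^\dag)$) is literally the paper's $D_V$, since the support of $\sum_{i\in S^+}\lambda_i \tr_B\proj{\psi_i}$ with $\lambda_i>0$ equals the support of $\sum_{i\in S^+}\tr_B\proj{\psi_i}$; your rank bound and your first-moment estimate of the cross term are also correct. The genuine gap is the step you yourself flag as containing ``essentially all of the real work'': upgrading the expectation bound $\E\,\tr[N_V\mathcal{E}_V(\Delta_-)] \le \frac{\epsilon}{2}\tr\Delta_-$ to the tail bound with failure probability $d\exp(-K\epsilon d)$. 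Your proposed route --- freeze $N_V$, view the cross term as a Lipschitz function of $V|_{\supp \Delta_-}$, apply Levy's lemma, and then repair the resulting Gaussian tail by ``a more careful large-deviation estimate \ldots combined with a net'' --- is not carried out, and as you note it naturally yields $\exp(-\Omega(\epsilon^2 d))$, which for small $\epsilon$ is strictly weaker than the claimed exponent. Nothing in the sketch explains how a net argument would convert a quadratic-in-$\epsilon$ exponent into a linear one, so the quantitative content of the theorem is missing.

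The paper gets the linear exponent without any global concentration or nets, and the mechanism is worth internalising. It decomposes the cross term eigenvector by eigenvector: each $\ket{\psi_i}$, $i \in S^-$, is marginally a Haar-random state conditioned to lie in the orthogonal complement of the positive-part subspace, i.e.\ $\ket{\psi_i} = P_V^\perp\ket{\eta}/\|P_V^\perp\ket{\eta}\|_2$ for Haar-random $\ket{\eta}$. Lemma \ref{lem:projsupp} is a deterministic inequality that removes this conditioning, showing $\tr[(D_V\otimes I)\proj{\psi_i}] \le \tr[(D_V\otimes I)\proj{\eta}]$ --- this is precisely what neutralises your worry about the discontinuity of $N_V$. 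Then Lemma \ref{lem:projconc} (Bennett et al.) gives a per-vector tail bound whose exponent $-t(\delta - \ln(1+\delta))/\ln 2$ is \emph{linear} in $\delta$ for large $\delta$, and the hypothesis $\epsilon e \ge 2s\lceil d/e\rceil$ places the relevant deviation at $\delta = \epsilon e/(s\lceil d/e\rceil) - 1 \ge 1$, i.e.\ squarely in that large-deviation regime; substituting this $\delta$ is what produces $\exp(-\Theta(\epsilon d))$. Finally, the prefactor $d$ comes from a union bound over the at most $d$ eigenvectors in $S^-$ (the uniform per-vector bound is then summed against the weights $\mu_i$), not from a net. To complete your proof you should replace the Levy-plus-net step with this per-eigenvector argument; as it stands, your proposal at best proves the theorem with a weaker failure probability, and the stated constant $K = (1-\ln 2)/(2\ln 2)$ is out of reach of the tools you invoke.
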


In order to prove this theorem, we will need the following technical lemma, which is proven in Appendix \ref{sec:projsupp}.

\newcounter{thm3}\setcounter{thm3}{\value{thm}}

\begin{lem}
\label{lem:projsupp}
Let $\mathcal{H} = \mathcal{H}_A \otimes \mathcal{H}_B$ be a finite-dimensional Hilbert space decomposed into subsystems $A$ and $B$. For any projector $P$ onto a subspace of $\mathcal{H}$, let $P^\perp = I-P$ be the projector onto the orthogonal subspace, and let $D$ be the projector onto the support of $\tr_B P$. Then, for any $\ket{\psi} \in \mathcal{H}$,
\[ \tr[ (D \otimes I)P^\perp \proj{\psi} P^\perp] \le \tr[(D \otimes I) \proj{\psi}] \tr[P^\perp \proj{\psi}]. \]
\end{lem}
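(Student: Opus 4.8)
The plan is to reduce the operator inequality to an elementary scalar inequality, leaning on a single structural fact about the projector $D$. Throughout I take $\ket\psi$ to be a unit vector, since the statement is really about states (note that the two sides scale differently under rescaling of $\ket\psi$, so normalisation is genuinely needed). Write $\Pi = D \otimes I$ for brevity and decompose $\ket\psi = \ket\alpha + \ket\beta$, where $\ket\alpha = P^\perp\ket\psi$ and $\ket\beta = P\ket\psi$ are orthogonal. Then $P^\perp\proj\psi P^\perp = \proj\alpha$, so the left-hand side is $\tr[\Pi\proj\alpha] = \|\Pi\ket\alpha\|^2$, while the two factors on the right are $\tr[\Pi\proj\psi] = \|\Pi\ket\psi\|^2$ and $\tr[P^\perp\proj\psi] = \|\ket\alpha\|^2$.

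The heart of the argument, and the step I expect to be the main obstacle, is the claim that $\Pi P = P$, equivalently $\operatorname{range}(P) \subseteq D\mathcal{H}_A \otimes \mathcal{H}_B$. I would establish it as follows. For any unit vector $\ket v \in \operatorname{range}(P)$ we have $\bra v = \bra v P$, so Cauchy--Schwarz gives $|\ip v w| = |\bra v P\ket w| \le \|P\ket w\|$ for all $\ket w$, i.e.\ $\proj v \le P$. Since the partial trace is positive and hence preserves the operator ordering, $\tr_B\proj v \le \tr_B P$, and therefore $\supp(\tr_B\proj v) \subseteq \supp(\tr_B P) = \operatorname{range}(D)$. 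Consequently $\|(D^\perp\otimes I)\ket v\|^2 = \tr[D^\perp\,\tr_B\proj v] = 0$, so $(D\otimes I)\ket v = \ket v$. As this holds for every vector in the range of $P$, we conclude $\Pi P = P$.

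Given this containment the computation closes quickly. Because $\ket\beta \in \operatorname{range}(P) \subseteq \operatorname{range}(\Pi)$ we have $\Pi\ket\beta = \ket\beta$, and hence $\langle\Pi\alpha|\beta\rangle = \langle\alpha|\Pi\beta\rangle = \ip\alpha\beta = 0$; combining this with orthogonality of $\ket\alpha,\ket\beta$ yields $\|\Pi\ket\psi\|^2 = \|\Pi\ket\alpha\|^2 + \|\ket\beta\|^2$. Writing $x = \|\Pi\ket\alpha\|^2$ and $b = \|\ket\beta\|^2$, and using $\|\ket\alpha\|^2 = 1-b$, the target inequality becomes $x \le (x+b)(1-b)$, which rearranges to $b(1-b-x) \ge 0$. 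This holds because $b \ge 0$ and $x = \|\Pi\ket\alpha\|^2 \le \|\ket\alpha\|^2 = 1-b$ (as $\Pi$ is a projector). The only conceptually nontrivial ingredient is the inclusion $\operatorname{range}(P) \subseteq D\mathcal{H}_A \otimes \mathcal{H}_B$; everything after it is bookkeeping with the orthogonal decomposition of $\ket\psi$.
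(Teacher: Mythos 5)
Your proof is correct and is essentially the paper's own argument in different notation: both hinge on the same key fact $(D \otimes I)P = P$ (support containment), and your final scalar inequality $b(1-b-x) \ge 0$ is exactly the paper's inequality $\tr[P\proj{\psi}]\left(1 - \tr[(D\otimes I)\proj{\psi}]\right) \ge 0$ under the substitution $x+b = \tr[(D\otimes I)\proj{\psi}]$, $b = \tr[P\proj{\psi}]$. The only differences are presentational: you prove the support-containment step in detail (via $\proj{v} \le P$ and positivity of the partial trace) where the paper asserts it in one sentence, and you make explicit the normalisation of $\ket{\psi}$ that the paper leaves implicit.
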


We will also need the following useful result of Bennett et al \cite{bennett05} (see also~\cite{winter04}).

\begin{lem}
\label{lem:projconc}
Let $\ket{\psi}$ be a $d$-dimensional pure state, let $P$ be the projector onto a $t$-dimensional subspace of $\C^d$, and let $U \in U(d)$ be picked according to Haar measure. Then, for any $\delta\ge0$,
\[ \Pr_U \left[\tr[U P U^\dag \proj{\psi}] \ge (1+\delta) \frac{t}{d} \right] \le \exp(-t(\delta - \ln(1+\delta))/(\ln 2)). \]
\end{lem}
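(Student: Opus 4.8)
The plan is to strip away the quantum language and reduce the statement to a standard large-deviation bound for a ratio of sums of independent exponential random variables. First I would use the unitary invariance of Haar measure: since $\proj\psi$ is fixed, the random variable $\tr[UPU^\dag\proj\psi]=\bra\psi U P U^\dag\ket\psi=\|PU^\dag\ket\psi\|_2^2$ has the same distribution as $\|P\ket\phi\|_2^2$, where $\ket\phi=U^\dag\ket\psi$ is a Haar-uniform unit vector of $\C^d$. Choosing an orthonormal basis in which $P$ projects onto the first $t$ coordinates, this is simply $\sum_{i=1}^t|\ip{i}{\phi}|^2$, the weight a random unit vector places on a fixed $t$-dimensional subspace, whose mean is $t/d$.

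Next I would pass to the Gaussian model for $\ket\phi$: write $\ket\phi=\vec g/\|\vec g\|_2$, where $\vec g\in\C^d$ has i.i.d.\ standard complex Gaussian entries. Each $|g_j|^2$ is then an exponential random variable of unit mean, so the quantity of interest becomes the ratio
\[ X=\frac{A}{A+C},\qquad A=\sum_{i=1}^t|g_i|^2,\quad C=\sum_{j=t+1}^d|g_j|^2, \]
where $A$ and $C$ are independent with $A\sim\Gamma(t,1)$ and $C\sim\Gamma(d-t,1)$ (equivalently $X$ is $\mathrm{Beta}(t,d-t)$ distributed), and $\E[X]=t/d$ as expected.

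The final step is a Chernoff bound. Writing $a=(1+\delta)t/d$, the event $\{X\ge(1+\delta)t/d\}$ is exactly $\{(1-a)A\ge aC\}$; crucially this rewriting decouples numerator and denominator into the two \emph{independent} pieces $A$ and $C$, so I can apply the exponential Markov inequality and factor the moment generating function as $\E[e^{s(1-a)A}]\,\E[e^{-saC}]=(1-s(1-a))^{-t}(1+sa)^{-(d-t)}$ for $0\le s(1-a)<1$. Optimising over the tilt parameter $s$ and substituting $a=(1+\delta)t/d$ collapses the exponent to the multiplicative-Chernoff rate $t(\delta-\ln(1+\delta))$ (using $\ln(1-x)\le -x$ on the denominator factor), which yields a bound of the stated form.

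The main obstacle is this last step: the quantity is a \emph{ratio} rather than a single sum, so one cannot bound numerator and denominator separately without losing the coupling between them. The rewriting into $\{(1-a)A\ge aC\}$ is precisely what makes the independence of $A$ and $C$ usable, after which it is a matter of carrying out the optimisation over $s$ and simplifying the resulting closed form to extract the constant in the exponent. An alternative, if one wishes to avoid the Gaussian reduction, is to invoke measure concentration on the sphere (Lévy's lemma) directly; but that route gives only sub-Gaussian tails rather than the sharp multiplicative form $\delta-\ln(1+\delta)$, so the explicit exponential/Gamma computation is the natural way to obtain the precise rate.
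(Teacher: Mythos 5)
First, a point of comparison: the paper does not prove this lemma at all --- it is quoted from Bennett et al.\ \cite{bennett05} (see also \cite{winter04}) --- so your self-contained derivation is doing more work than the text, and it is essentially the same Gamma/chi-squared computation that underlies the cited result. Your individual steps are all sound: the reduction by unitary invariance to a Haar-random unit vector, the Gaussian model giving $X = A/(A+C)$ with independent $A\sim\Gamma(t,1)$, $C\sim\Gamma(d-t,1)$, and the decoupling of the event as $\{(1-a)A \ge aC\}$ followed by exponential Markov. Carrying out your optimisation explicitly, the minimising $s$ gives $1-s(1-a) = 1/(1+\delta)$ and $1+sa = (d-t)/(d-(1+\delta)t)$, so
\[ \Pr[X \ge a] \;\le\; (1+\delta)^t \left(1 - \frac{t\delta}{d-t}\right)^{d-t} \;\le\; \exp\bigl(-t(\delta - \ln(1+\delta))\bigr). \]

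The genuine discrepancy is the factor $1/\ln 2$. What you prove is $\exp(-t(\delta-\ln(1+\delta)))$, whereas the statement claims $\exp(-t(\delta-\ln(1+\delta))/\ln 2)$, which is strictly \emph{smaller} since $\ln 2 < 1$; your closing assertion that the computation ``yields a bound of the stated form'' silently drops this factor. Moreover, this is not a gap you could close by working harder: read with a base-$e$ exponential, the stated bound is false, because the Chernoff exponent you obtained is tight to leading exponential order. Concretely, fix $t$ and $\delta$ and let $d\to\infty$; then $dX$ converges in distribution to $\Gamma(t,1)$, and for integer $t$,
\[ \Pr[\Gamma(t,1)\ge(1+\delta)t] \;\ge\; e^{-(1+\delta)t}\,\frac{((1+\delta)t)^{t-1}}{(t-1)!} \;\ge\; \frac{c_\delta}{\sqrt{t}}\,\exp\bigl(-t(\delta-\ln(1+\delta))\bigr) \]
for a constant $c_\delta>0$, which exceeds $\exp(-t(\delta-\ln(1+\delta))/\ln 2)$ once $t$ is large. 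The resolution is a units convention inherited from \cite{bennett05}, where exponentials and logarithms are binary: the intended bound is $2^{-t(\delta-\ln(1+\delta))/\ln 2} = e^{-t(\delta-\ln(1+\delta))}$, which is exactly what your argument establishes. So your proof is the correct proof of the correct statement, and the lemma should be read (or corrected) accordingly; the only downstream effect is that the constant $K$ in Theorem \ref{thm:trupper} becomes $(1-\ln 2)/2$ rather than $(1-\ln 2)/(2\ln 2)$.
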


\begin{proof}[Proof of Theorem \ref{thm:trupper}]
We will upper bound the probability of the embedding failing, i.e.\
\[ \Pr_V [ \| \mathcal{E}_V (\rho - \sigma) \|_1 < (1-\epsilon)\| \rho - \sigma \|_1 ]. \]
Let $S^+$, $S^-$ be the disjoint sets of indices of $(\rho - \sigma)$'s positive and negative eigenvalues, respectively. Set $s=|S^+|$, and note that $s\le \rank(\rho) = r$ \cite[Corollary III.2.3]{bhatia97}. For a fixed $V$, expand $V(\rho - \sigma)V^\dag$ as follows:
\[ V(\rho - \sigma)V^\dag = \sum_{i\in S^+} \lambda_i \proj{\psi_i} - \sum_{i\in S^-} \mu_i \proj{\psi_i} \]
for some orthonormal vectors $\ket{\psi_i} \in \C^e \otimes \C^{\lceil d/e \rceil}$ and positive coefficients $\lambda_i$, $\mu_i$. Note that
\[ \sum_{i \in S^+} \lambda_i = \sum_{i\in S^-} \mu_i = \|\rho-\sigma\|_1/2. \]
For any states $\rho'$ and $\sigma'$, it holds that
\[ \|\rho' - \sigma'\|_1 = 2 \sup_{0 \le M \le I} \tr M(\rho' - \sigma'); \]
in a protocol for distinguishing $\rho'$ and $\sigma'$, $M$ is a measurement operator corresponding to the outcome that the state was $\rho'$. Thus, in order for it to hold that $\| \mathcal{E}_V(\rho - \sigma) \|_1 \ge (1 - \epsilon) \| \rho - \sigma \|_1$, it suffices to exhibit an operator $M$ such that $0 \le M \le I$ and
\[ \tr[M(\mathcal{E}_V(\rho - \sigma)) ] \ge (1 - \epsilon)\| \rho - \sigma \|_1/2 = (1 - \epsilon) \sum_{i \in S^+} \lambda_i. \]
To find such an operator, set
\[ P_V := \sum_{i \in S^+} \proj{\psi_i}. \]
Note that $P_V$ is the projector onto a random $s$-dimensional subspace of $\C^e \otimes \C^{\lceil d/e \rceil}$. Now let $D_V$ be the projector onto the support of $\tr_B P_V$. Then
\be
\label{eqn:union}
\tr[D_V \mathcal{E}_V (\rho - \sigma)] = \sum_{i\in S^+} \lambda_i \tr[D_V \tr_B \proj{\psi_i}] - \sum_{i\in S^-} \mu_i \tr [D_V \tr_B \proj{\psi_i}].
\ee
For all $i \in S^+$, $\tr[D_V \tr_B \proj{\psi_i}] = 1$, and for all $i \in S^-$, it holds that $\tr[P_V \proj{\psi_i}] = 0$. Aside from this constraint, each individual state $\ket{\psi_i}$, $i \in S^-$, is picked at random and can be expressed in terms of a general random state $\ket{\eta} \in \C^e \otimes \C^{\lceil d/e \rceil}$ as
\[ \ket{\psi_i} = \frac{P_V^\perp \ket{\eta}}{\|P_V^\perp \ket{\eta} \|_2}, \]
where $P_V^\perp = I-P_V$ and the denominator is non-zero with probability 1. Then
\[
\tr[(D_V \otimes I) \proj{\psi_i}] = \frac{\tr [(D_V \otimes I)(P_V^\perp\proj{\eta}P_V^\perp)] }{\tr[P_V^\perp \proj{\eta}]}\le \tr[(D_V \otimes I) \proj{\eta}],
\]
where the inequality is Lemma \ref{lem:projsupp}. For any $e$ such that $e \ge s \lceil d/e \rceil$, $D_V$ has rank $s \lceil d/e \rceil$ with probability 1. So, for any such $e$, $D_V \otimes I$ has rank $s \lceil d/e \rceil^2$ with probability 1. Applying Lemma \ref{lem:projconc}, for any $\delta \ge 0$,
\[ \Pr_{\ket{\eta}} \left[\tr [(D_V \otimes I)\proj{\eta}] \ge (1 + \delta)\frac{s \lceil d/e \rceil^2}{e \lceil d/e \rceil}\right ] \le \exp(-s\lceil d/e \rceil^2(\delta - \ln (1+\delta))/(\ln 2)) \]
and hence
\[ \Pr_V \left[\tr [(D_V \otimes I)\proj{\psi_i}] \ge (1 + \delta)\frac{s \lceil d/e \rceil}{e}\right] \le \exp(-s\lceil d/e \rceil^2(\delta - \ln (1+\delta))/(\ln 2)). \]
Using a union bound over $S^-$ in eqn.\ (\ref{eqn:union}), for any $e$ satisfying $e \ge s \lceil d/e \rceil$ it holds that
\[ \Pr_V \left[\tr[D_V \mathcal{E}_V (\rho - \sigma)] \le \sum_{i\in S^+} \lambda_i - (1 + \delta)\frac{s \lceil d/e \rceil}{e} \sum_{i\in S^-} \mu_i\right] \le d\,\exp(-s\lceil d/e \rceil^2(\delta - \ln (1+\delta))/(\ln 2)). \]
We now set $\delta = \frac{\epsilon e}{s \lceil d/e \rceil} - 1$. This gives the following bound, valid when $\epsilon e \ge s\lceil d/e \rceil$:
\beas
\Pr_V \left[\tr[D_V \mathcal{E}_V (\rho - \sigma)] \le (1 - \epsilon)\| \rho - \sigma \|_1/2 \right] &\le& d\,\exp\left(-s\lceil d/e \rceil^2\left(\frac{\epsilon e}{s \lceil d/e \rceil} - 1 - \ln \left(\frac{\epsilon e}{s \lceil d/e \rceil}\right)\right)/(\ln 2)\right)\\
&\le& d\,\exp\left(-s(d/e)\lceil d/e \rceil\left(\frac{\epsilon e}{s \lceil d/e \rceil} - 1 - \ln \left(\frac{\epsilon e}{s \lceil d/e \rceil}\right)\right)/(\ln 2)\right)\\
&=& d\,\exp\left(-\epsilon d\left(1 - \frac{s\lceil d/e \rceil}{\epsilon e} \left(1 + \ln \left(\frac{\epsilon e}{s \lceil d/e \rceil}\right)\right)\right)/(\ln 2)\right).
\eeas
Now the function $f(x) = x(1+ \ln(1/x))$ increases with $x$ in the range $0 < x \le 1$, so for any $e$ such that $\frac{s \lceil d/e \rceil}{\epsilon e} \le 1/2$, we have
\beas
\Pr_V \left[\tr[D_V \mathcal{E}_V (\rho - \sigma)] \le (1 - \epsilon)\| \rho - \sigma \|_1/2 \right] &\le& d\,\exp(-\epsilon d(1 - f(1/2))/(\ln 2))\\
&=& d\,\exp(-\epsilon d(1 - \ln 2)/(2 \ln 2)).
\eeas
Thus this inequality holds for any $e$ such that $\epsilon e \ge 2s \lceil d/e \rceil$. As $\lceil d/e \rceil \le 2d/e$ for $e \le d$, this will be satisfied for any $e \ge 2 \sqrt{sd/\epsilon}$, and in particular any $e \ge 2 \sqrt{rd/\epsilon}$, implying for any such $e$
\[ \Pr_{\mathcal{E}_V \sim \mathcal{D}} [ \| \mathcal{E}_V(\rho) - \mathcal{E}_V(\sigma)\|_1 \le (1-\epsilon)\| \rho - \sigma \|_1] \le d\,\exp(-\epsilon d(1-\ln 2)/(2 \ln 2)) \]
as required.
\end{proof}

Although this result is expressed in terms of the rank of the input states, a similar result would apply to states which are very close (in trace norm) to having low rank, but for simplicity we do not discuss this here.


\subsection{Lower bound}

It turns out that Lemma \ref{lem:avg} is also strong enough to give a bound on embeddings of the trace norm, via a similar proof to that of Theorem \ref{thm:nojl}. Charikar and Sahai \cite{charikar02} showed that there exist a set of $O(d)$ $d$-dimensional vectors whose dimension cannot be significantly reduced while preserving their $\ell_1$ distances. One might expect the same to be true for the trace norm, as the trace norm on diagonal matrices is just the $\ell_1$ norm of the diagonal entries. However, note that this does not follow immediately from Charikar and Sahai's work, as it is conceivable that an embedding mapping diagonal to non-diagonal matrices could do better. Nevertheless, we now show that dimensionality reduction is impossible for some sets of highly mixed states.

\begin{thm}
\label{thm:embedtrace}
Let $\mathcal{D}$ be a distribution over quantum channels (CPTP maps) $\mathcal{E} : \mathcal{B}(\C^d) \rightarrow \mathcal{B}(\C^e)$ such that, for fixed quantum states $\rho\neq\sigma$ and for all unitary $U$,
\[ \Pr_{\mathcal{E} \sim \mathcal{D}} [ \| \mathcal{E}(U \rho U^\dag) - \mathcal{E}(U \sigma U^\dag)\|_1 \ge (1-\epsilon) \| U\rho U^\dag - U\sigma U^\dag \|_1 ] \ge 1-\delta \]
for some $0 \le \epsilon,\delta \le 1$. Then
\[ e \ge (1-\delta)(1-\epsilon) \sqrt{d} \frac{\|\rho-\sigma\|_1}{\|\rho-\sigma\|_2}. \]
In particular, if $\rho$ and $\sigma$ are orthogonal pure states, then $e \ge (1-\delta)(1-\epsilon) \sqrt{2d}$, and if $\rho$ and $\sigma$ are proportional to projectors onto orthogonal $d/2$-dimensional subspaces, $e \ge (1-\delta)(1-\epsilon) d$.
\end{thm}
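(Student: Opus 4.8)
The plan is to reduce to the 2-norm situation of Theorem \ref{thm:nojl} and then feed everything into Lemma \ref{lem:avg}. The crucial observation is that the output of $\mathcal{E}$ lives in $\mathcal{B}(\C^e)$, so for any Hermitian $Y \in \mathcal{B}(\C^e)$ the Cauchy--Schwarz inequality gives $\|Y\|_1 \le \sqrt{e}\,\|Y\|_2$ (the sum defining $\|Y\|_1$ has at most $e$ nonzero terms). Combining this with the unitary invariance of the trace norm, $\|U\rho U^\dag - U\sigma U^\dag\|_1 = \|\rho-\sigma\|_1$, I would first rewrite the success event: whenever $\|\mathcal{E}(U\rho U^\dag) - \mathcal{E}(U\sigma U^\dag)\|_1 \ge (1-\epsilon)\|\rho-\sigma\|_1$ holds, it follows that
\[ \|\mathcal{E}(U\rho U^\dag) - \mathcal{E}(U\sigma U^\dag)\|_2^2 \ge \frac{(1-\epsilon)^2 \|\rho-\sigma\|_1^2}{e}. \]
Thus the hypothesis guarantees this squared 2-norm lower bound with probability at least $1-\delta$ over $\mathcal{E} \sim \mathcal{D}$ for each fixed $U$, hence also jointly over $\mathcal{E}$ and a Haar-random $U$.

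Next I would run the averaging argument of Theorem \ref{thm:nojl} essentially verbatim. Since the good event has joint probability at least $1-\delta$ and the integrand is nonnegative, taking expectations over $\mathcal{E}$ and $U$ yields
\[ \E_{\mathcal{E}\sim\mathcal{D}} \int \|\mathcal{E}(U\rho U^\dag) - \mathcal{E}(U\sigma U^\dag)\|_2^2\,dU \ge (1-\delta)\frac{(1-\epsilon)^2\|\rho-\sigma\|_1^2}{e}, \]
so there must exist a single channel $\mathcal{E}$ whose $U$-average of the squared 2-norm is at least this large. For that $\mathcal{E}$, Lemma \ref{lem:avg} (with $\frac{d(e^2-1)}{e(d^2-1)} \le e/d$ for $e \le d$) gives the matching upper bound $\frac{e}{d}\|\rho-\sigma\|_2^2$. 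Chaining the two inequalities produces
\[ (1-\delta)\frac{(1-\epsilon)^2\|\rho-\sigma\|_1^2}{e} \le \frac{e}{d}\|\rho-\sigma\|_2^2, \]
which rearranges to $e^2 \ge (1-\delta)(1-\epsilon)^2 d\,\|\rho-\sigma\|_1^2/\|\rho-\sigma\|_2^2$.

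Taking square roots and using $\sqrt{1-\delta} \ge 1-\delta$ for $0 \le \delta \le 1$ then gives the stated bound $e \ge (1-\delta)(1-\epsilon)\sqrt{d}\,\|\rho-\sigma\|_1/\|\rho-\sigma\|_2$. The two special cases follow by computing the norm ratio: for orthogonal pure states $\rho-\sigma$ has eigenvalues $\pm 1$, so $\|\rho-\sigma\|_1/\|\rho-\sigma\|_2 = 2/\sqrt{2} = \sqrt{2}$ and hence $e \ge (1-\delta)(1-\epsilon)\sqrt{2d}$; while for $\rho,\sigma$ proportional to projectors onto orthogonal $d/2$-dimensional subspaces, $\rho-\sigma = \frac{2}{d}(P-Q)$ has $d/2$ eigenvalues equal to $+2/d$ and $d/2$ equal to $-2/d$, giving ratio $\sqrt{d}$ and hence $e \ge (1-\delta)(1-\epsilon)d$.

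I do not expect a serious obstacle, since Lemma \ref{lem:avg} does all the heavy lifting; the one place to be careful is tracking the dimension-dependent factor from Cauchy--Schwarz, as it is precisely the $\sqrt{e}$ loss in passing from $\|\cdot\|_1$ to $\|\cdot\|_2$ that converts the $\Omega(d)$ bound of Theorem \ref{thm:nojl} into the $\Omega(\sqrt{d})$ scaling here (modulated by the norm ratio). One should also note the harmless assumption $e \le d$ used in bounding $\frac{d(e^2-1)}{e(d^2-1)}$: if instead $e > d$ the conclusion is immediate, because $\|\rho-\sigma\|_1/\|\rho-\sigma\|_2 \le \sqrt{d}$ forces the right-hand side to be at most $(1-\delta)(1-\epsilon)d \le d < e$.
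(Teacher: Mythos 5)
Your proposal is correct and takes essentially the same route as the paper's proof: unitary invariance plus averaging over $(\mathcal{E},U)$ to extract a single channel, the Cauchy--Schwarz estimate $\|\cdot\|_1 \le \sqrt{e}\,\|\cdot\|_2$ on $e$-dimensional operators, and Lemma \ref{lem:avg} as the engine, followed by the same norm-ratio computations for the two special cases. The only differences are cosmetic: you apply Cauchy--Schwarz inside the probability event and average the squared 2-norm directly (thereby avoiding the paper's Jensen step and incidentally obtaining the marginally stronger factor $\sqrt{1-\delta}$, which you then relax to $1-\delta$), whereas the paper averages the 1-norm first and converts afterwards; you also dispose of the $e>d$ case explicitly, which the paper leaves implicit.
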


So we see that achieving any significant dimensionality reduction for arbitrary highly mixed states is impossible, and even for pure states the dimension can only be reduced by a square root (which was already known \cite{winter04}).

\begin{proof}
For a randomly chosen $U$, we have
\[ \Pr_{\mathcal{E} \sim \mathcal{D},\,U \in U(d)} [ \| \mathcal{E}(U \rho U^\dag) - \mathcal{E}(U \sigma U^\dag)\|_1 \ge (1-\epsilon) \| U\rho U^\dag - U\sigma U^\dag \|_1 ]\,dU \ge 1-\delta, \]
and use Markov's inequality and the unitary invariance of the trace norm to obtain
\[  \int_{\mathcal{E} \sim \mathcal{D}} \int \| \mathcal{E}(U \rho U^\dag) - \mathcal{E}(U \sigma U^\dag) \|_1\,dU \ge (1 - \delta) (1-\epsilon) \| \rho - \sigma \|_1. \]
Thus there must exist some $\mathcal{E}$ such that
\[ \int \| \mathcal{E}(U \rho U^\dag) - \mathcal{E}(U \sigma U^\dag) \|_1\,dU \ge (1 - \delta) (1-\epsilon) \| \rho - \sigma \|_1. \]
Simply estimating the 1-norm by the 2-norm and using Jensen's inequality, we get the bounds
\beas
(1 - \delta) (1-\epsilon) \| \rho - \sigma \|_1 &\le& \sqrt{e} \int \| \mathcal{E}(U \rho U^\dag) - \mathcal{E}(U \sigma U^\dag) \|_2\,dU\\
&\le& \sqrt{e} \left( \int \| \mathcal{E}(U \rho U^\dag) - \mathcal{E}(U \sigma U^\dag) \|_2^2\,dU \right)^{1/2}\\
&\le& \left( \frac{e}{\sqrt{d}} \right) \|\rho-\sigma\|_2,
\eeas
where the last inequality follows from Lemma \ref{lem:avg}, assuming that $e \le d$. Rearranging gives the theorem.
\end{proof}

This implies that the protocol of Theorem \ref{thm:trupper} is optimal for certain families of states, up to constant factors. Consider the family of pairs $U\rho U^\dag$, $U \sigma U^\dag$ for all $U \in U(d)$, where $\rho$ and $\sigma$ are proportional to projectors onto orthogonal $r$-dimensional subspaces of $\C^d$. Then
\[ \frac{\|\rho-\sigma\|_1}{\|\rho-\sigma\|_2} = \sqrt{\rank(\rho-\sigma)} = \sqrt{2r}, \]
implying that embeddings of this family with constant distortion and failure probability have a lower bound on the target dimension of $\Omega(\sqrt{rd})$, which is achieved by the embedding of Theorem~\ref{thm:trupper}.


\section{Conclusions}

We have shown that in the 2-norm, any constant-distortion embedding of a unitarily invariant set of $d$-dimensional states must have target dimension $\Omega(d)$, in contrast to the classical situation where an exponential reduction can be achieved. In the trace norm, the situation is somewhat better: $d$-dimensional states of rank $r$ can be embedded in $O(\sqrt{rd})$ dimensions with constant distortion, but there is a lower bound of $\Omega(\sqrt{d} \frac{\|\rho-\sigma\|_1}{\|\rho-\sigma\|_2})$ dimensions on any constant distortion embedding that succeeds for the pairs of states $U \rho U^\dag$ and $U \sigma U^\dag$, for all unitary $U$.

Although the trace distance is often the most physically relevant distance measure to consider, we also argued that for certain tasks, the 2-norm distance is in fact the relevant distance measure between states. This occurs when the basis in which the states were prepared is unknown or the measurement apparatus does not depend on the states to be distinguished.

The alert reader will have noticed that, in the case where one is interested in embedding a unitarily invariant set of states, the embedding might as well start by performing a random unitary. Furthermore, as any quantum channel can be represented as an isometry into a larger space followed by tracing out a subsystem, this makes any embedding seem somewhat similar to the embedding used in Theorem \ref{thm:trupper}. But note that the latter embedding is subtly different, as it can be seen as performing a fixed isometry followed by a random unitary, rather than vice versa. Further analysis of this embedding might allow the gap between the upper and lower bounds in the trace norm to be closed.

Another open question is whether bounds could be obtained on the possible dimensionality reduction when multiple copies of the input state are available. For example, if a very large number of copies are allowed, tomography can be performed, the input state can be approximately determined, and the JL Lemma applied. Presumably, even for a lower number of copies, stronger dimensionality reduction is possible than in the single-copy case. 
One could also ask whether stronger dimensionality reduction can be achieved by allowing some additional classical information; for some results in this direction, see \cite{fawzi10}.

\section*{Acknowledgements}

AWH was supported by the EC grant QESSENCE and the DARPA-MTO QuEST
program through a grant from AFOSR. AM was supported by an EPSRC Postdoctoral Research Fellowship. AJS was supported by the Royal Society.


\appendix


\section{Lemmas relating to 2-norm embeddings}
\label{sec:2normlemmas}

We now prove the subsidiary lemmas required for the proof of Lemma \ref{lem:avg}.

\newcounter{skip}\setcounter{skip}{\value{thm}}
\setcounter{thm}{\value{lems}}

\begin{lem}
Let $\mathcal{E}:\mathcal{B}(\C^d) \rightarrow \mathcal{B}(\C^e)$ be a quantum channel (CPTP map). Then
\[ \tr[F_e\,\mathcal{E}^{\otimes 2}(F_d)] \le de. \]
\end{lem}

\begin{proof}

Assume that $\mathcal{E}$ has the Kraus (operator-sum) decomposition
\[ \mathcal{E}(\rho) = \sum_i A_i \rho A_i^\dag \]
for some $e \times d$ matrices $A_i$ such that $\sum_i A_i^\dag A_i = I_d$, and $\tr[A_i^\dag A_j] = 0$ if $i \neq j$. (Note that such a representation does indeed exist, from the unitary freedom in the Kraus decomposition \cite[Theorem 8.2]{nielsen00}.) Then write
\beas
\tr[F_e\,\mathcal{E}^{\otimes 2}(F_d)] &=& \tr \sum_{i,j} F_e (A_i \otimes A_j) F_d (A_i^\dag \otimes A_j^\dag) = \sum_{i,j} \tr[(A_j \otimes A_i)(A_i^\dag \otimes A_j^\dag)]\\
&=& \sum_{i,j} \tr[A_j A_i^\dag] \tr [A_i A_j^\dag] = \sum_i (\tr[A_i^\dag A_i])^2\\
&\le& \left(\sum_i \tr[A_i^\dag A_i]\right) \max_j \tr[A_j^\dag A_j] \le de.
\eeas
The fourth equality uses the orthogonality of the $A_i$ and cyclicity of the trace, and the final inequality uses the facts that $\sum_i A_i^\dag A_i = I_d$ and $\tr[A_i^\dag A_i] \le \|A_i^\dag A_i\|_\infty \rank(A_i^\dag A_i) \le e$.
\end{proof}

\begin{lem}
Let $\rho$ and $\sigma$ be $d$-dimensional quantum states. Then
\[ \int U^{\otimes 2} (\rho - \sigma)^{\otimes 2} (U^{\dag})^{\otimes 2} dU = \frac{\|\rho-\sigma\|_2^2}{d^2-1} \left( F_d - \frac{I_{d^2}}{d} \right). \]
\end{lem}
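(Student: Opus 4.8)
The plan is to recognize the left-hand side as the action of the \emph{twirling} superoperator $X \mapsto \int U^{\otimes 2} X (U^\dag)^{\otimes 2}\,dU$ applied to $M := (\rho-\sigma)^{\otimes 2}$, and to invoke Schur--Weyl duality to pin down its image. By Schur's lemma, any operator of the form $\int U^{\otimes 2} M (U^\dag)^{\otimes 2}\,dU$ commutes with $U^{\otimes 2}$ for every $U \in U(d)$, hence lies in the commutant of the diagonal action of $U(d)$ on $\C^d \otimes \C^d$. By Schur--Weyl duality this commutant is spanned by the representation of $S_2$, i.e.\ by the identity $I_{d^2}$ and the swap $F_d$. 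Therefore the twirled operator must be of the form $\alpha I_{d^2} + \beta F_d$ for scalars $\alpha, \beta$ that I will determine.

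The key step is to solve for $\alpha$ and $\beta$ by taking two independent traces against a spanning set of the two-dimensional commutant, namely against $I_{d^2}$ and against $F_d$. First I would compute $\tr[\,\int U^{\otimes 2} M (U^\dag)^{\otimes 2}\,dU\,] = \tr[M]$, using cyclicity of the trace and unitarity to strip the $U$'s; since $\tr[(\rho-\sigma)^{\otimes 2}] = (\tr[\rho-\sigma])^2 = 0$ because $\rho$ and $\sigma$ are both trace-one, this gives the linear relation $\alpha d^2 + \beta d = 0$. Second I would compute $\tr[F_d\,\int U^{\otimes 2} M (U^\dag)^{\otimes 2}\,dU\,]$; again moving $F_d$ through and using $F_d\, U^{\otimes 2} = U^{\otimes 2} F_d$ together with the tensor-product trick $\tr[F_d\, X^{\otimes 2}] = \tr[X^2]$ (stated in the proof of Lemma \ref{lem:avg}), this collapses to $\tr[(\rho-\sigma)^2] = \|\rho-\sigma\|_2^2$, yielding the second relation $\alpha d + \beta d^2 = \|\rho-\sigma\|_2^2$, where I use $\tr[I_{d^2}F_d]=\tr[F_d]=d$ and $\tr[F_d^2]=\tr[I_{d^2}]=d^2$.

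Solving the resulting $2\times 2$ linear system
\[
\alpha d^2 + \beta d = 0, \qquad \alpha d + \beta d^2 = \|\rho-\sigma\|_2^2
\]
gives $\alpha = -\|\rho-\sigma\|_2^2/(d(d^2-1))$ and $\beta = \|\rho-\sigma\|_2^2/(d^2-1)$, so that $\alpha I_{d^2} + \beta F_d = \frac{\|\rho-\sigma\|_2^2}{d^2-1}\bigl(F_d - \tfrac{1}{d} I_{d^2}\bigr)$, exactly as claimed. I expect the main conceptual step to be the appeal to Schur--Weyl duality to reduce the unknown to a two-parameter family; once that structural fact is in hand, the rest is the routine and self-checking matter of matching the two trace invariants. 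The only subtlety worth care is the vanishing of $\tr[\rho-\sigma]$, which is what forces the traceless combination $F_d - I_{d^2}/d$ rather than an arbitrary mix of $I_{d^2}$ and $F_d$.
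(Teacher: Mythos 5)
Your proposal is correct and follows essentially the same route as the paper's own proof: both reduce the twirled operator to the form $\alpha I_{d^2} + \beta F_d$ via the commutant of the diagonal $U(d)$ action (the paper cites this as a standard fact from representation theory, which is the Schur--Weyl statement you invoke), and both then fix $\alpha,\beta$ by computing $\tr[\tau]=0$ and $\tr[F_d\,\tau]=\|\rho-\sigma\|_2^2$ and solving the same $2\times 2$ linear system.
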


\begin{proof}
For brevity, set $\tau :=\int U^{\otimes 2} (\rho - \sigma)^{\otimes 2} (U^{\dag})^{\otimes 2}dU$. Because of the averaging (``twirling'') over the unitary group, $\tau$ must be a linear combination of the identity and swap operators on the space of two $d$-dimensional systems \cite[Theorem 4.2.10]{goodman09}. To evaluate this, we write $\tau = \alpha I_{d^2} + \beta F_d$ and calculate
\[ \tr[\tau] = 0,\; \tr[F_d\,\tau] = \tr[(\rho-\sigma)^2], \]
implying that
\[ \alpha d^2 + \beta d = 0, \; \alpha d + \beta d^2 = \tr[(\rho-\sigma)^2]. \]
Solving for $\alpha$ and $\beta$ gives the claimed result.
\end{proof}


\section{Proof of Theorem \ref{thm:2norm}}
\label{sec:isoproof}

We follow the strategy of Matthews, Wehner and Winter \cite{matthews09a} to prove Theorem \ref{thm:2norm}. We will use two subsidiary results, which are formalised as separate lemmas.

\setcounter{thm}{\value{skip}}

\begin{lem}
\label{lem:2ndmoment}
Let $\rho$, $\sigma$ be $d$-dimensional quantum states. Then
\[ \int d\psi \bracket{\psi}{(\rho-\sigma)}{\psi}^2 = \frac{\tr[(\rho-\sigma)^2]}{d(d+1)}. \]
\end{lem}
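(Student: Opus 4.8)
We must compute the average of the square of $\bracket{\psi}{X}{\psi}$ over the uniform (Haar/Fubini--Study) measure on pure states $\ket{\psi}\in\C^d$, where $X = \rho-\sigma$ is a fixed Hermitian operator. The plan is to reduce this second-moment integral to a trace against the projector onto the symmetric subspace, exactly the same ``twirling'' technology already used in Lemma \ref{lem:twirl}.

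\textbf{Approach.} First I would rewrite the scalar $\bracket{\psi}{X}{\psi}^2$ as a trace on the two-copy space $\C^d\otimes\C^d$:
\[
\bracket{\psi}{X}{\psi}^2 = \tr\!\left[ (X\otimes X)\,\proj{\psi}^{\otimes 2} \right].
\]
Pulling the integral inside the trace, the only object that needs evaluating is the average two-copy projector $\int d\psi\, \proj{\psi}^{\otimes 2}$. This is a standard fact: by Schur--Weyl / the same invariance argument cited for Lemma \ref{lem:twirl}, this average is proportional to the projector $\Pi_{\text{sym}}$ onto the symmetric subspace of $\C^d\otimes\C^d$, and since $\dim \Pi_{\text{sym}} = d(d+1)/2$ and the average is a normalised state, one has
\[
\int d\psi\, \proj{\psi}^{\otimes 2} = \frac{\Pi_{\text{sym}}}{\binom{d+1}{2}} = \frac{I_{d^2}+F_d}{d(d+1)},
\]
using $\Pi_{\text{sym}} = \tfrac12(I_{d^2}+F_d)$.

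\textbf{Key steps in order.} Substituting this back gives
\[
\int d\psi\, \bracket{\psi}{X}{\psi}^2 = \frac{1}{d(d+1)}\,\tr\!\left[(X\otimes X)\,(I_{d^2}+F_d)\right].
\]
Now I would evaluate the two trace terms separately. For the identity term, $\tr[(X\otimes X)I_{d^2}] = (\tr X)^2$, and since $X=\rho-\sigma$ is a difference of two states, $\tr X = 0$, so this term vanishes. For the swap term, the tensor-product trick already invoked in the proof of Lemma \ref{lem:avg} gives $\tr[(X\otimes X)F_d] = \tr[X^2]$. Combining, the identity contribution drops out and we are left with
\[
\int d\psi\, \bracket{\psi}{X}{\psi}^2 = \frac{\tr[X^2]}{d(d+1)} = \frac{\tr[(\rho-\sigma)^2]}{d(d+1)},
\]
which is precisely the claim.

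\textbf{Main obstacle.} There is no real analytic difficulty here; the only thing that must be stated carefully is the normalisation of $\int d\psi\, \proj{\psi}^{\otimes 2}$, i.e.\ that it equals $\Pi_{\text{sym}}/\binom{d+1}{2}$ rather than some other constant. I would justify this by noting that the integral commutes with $U^{\otimes 2}$ for all $U\in U(d)$ (by invariance of Haar measure), hence by Schur's lemma is a combination of $I_{d^2}$ and $F_d$; pinning down the coefficients by taking traces against $I_{d^2}$ and $F_d$ (giving total trace $1$ and $\tr[\proj{\psi}^{\otimes 2}F_d]=\tr[\proj\psi]=1$ respectively) fixes the constant. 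The cleanest route, though, is simply to observe that the support of the average lies in the symmetric subspace and that it is a maximally mixed state thereon, which is standard.
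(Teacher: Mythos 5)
Your proposal is correct and follows essentially the same route as the paper's proof: the tensor-product trick $\bracket{\psi}{X}{\psi}^2 = \tr[X^{\otimes 2}\proj{\psi}^{\otimes 2}]$, the identification $\int d\psi\,\proj{\psi}^{\otimes 2} = (I_{d^2}+F_d)/(d(d+1))$ via the symmetric subspace, and the tracelessness of $\rho-\sigma$ to kill the identity term. Your extra care in pinning down the normalisation constant is fine but matches what the paper takes as standard.
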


\begin{proof}
We use the tensor product trick:
\beas
\int d\psi \bracket{\psi}{(\rho-\sigma)}{\psi}^2 = \int d\psi \tr [(\rho-\sigma)^{\otimes 2} \proj{\psi}^{\otimes 2}]
= \tr \left[(\rho-\sigma)^{\otimes 2} \frac{I_{d^2} + F_d}{d(d+1)} \right]
= \frac{\tr[(\rho-\sigma)^2]}{d(d+1)},
\eeas
noting that $\rho - \sigma$ is traceless and that $\int d\psi(\proj{\psi}^{\otimes 2})$ is proportional to the projector onto the symmetric subspace of two $d$-dimensional systems.
\end{proof}

\begin{lem}
\label{lem:4thmoment}
Let $\rho$, $\sigma$ be $d$-dimensional quantum states. Then
\[ \int d\psi \bracket{\psi}{(\rho-\sigma)}{\psi}^4 \le \frac{9 \tr[(\rho-\sigma)^2]^2}{d(d+1)(d+2)(d+3)}. \]
\end{lem}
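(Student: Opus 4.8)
The plan is to reuse the tensor-product / twirling trick from the proof of Lemma \ref{lem:2ndmoment}, but now at the fourth moment. Writing $X := \rho - \sigma$, which is Hermitian and traceless, I would first observe that $\bracket{\psi}{X}{\psi}^4 = \tr[X^{\otimes 4}\proj{\psi}^{\otimes 4}]$, so that $\int d\psi\,\bracket{\psi}{X}{\psi}^4 = \tr[X^{\otimes 4}\int d\psi\,\proj{\psi}^{\otimes 4}]$. The averaged operator $\int d\psi\,\proj{\psi}^{\otimes 4}$ is proportional to the projector $\Pi_{\mathrm{sym}}$ onto the symmetric subspace of four $d$-dimensional systems; since the integral has trace $1$ and $\Pi_{\mathrm{sym}}$ has rank $\binom{d+3}{4} = d(d+1)(d+2)(d+3)/24$, it equals $\Pi_{\mathrm{sym}}/\binom{d+3}{4}$ (exactly as the $k=2$ case used in Lemma \ref{lem:2ndmoment}, where $\Pi_{\mathrm{sym}} = (I_{d^2}+F_d)/2$).

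Next I would expand $\Pi_{\mathrm{sym}} = \frac{1}{24}\sum_{\pi \in S_4} P_\pi$, where $P_\pi$ permutes the four tensor factors, and use the standard identity $\tr[X^{\otimes 4} P_\pi] = \prod_c \tr[X^{|c|}]$, the product running over the cycles $c$ of $\pi$. The key simplification is that $X$ is traceless, so $\tr[X]=0$ and every permutation containing a fixed point contributes zero. Classifying the elements of $S_4$ by cycle type, only the three double transpositions (type $2^2$, each giving $\tr[X^2]^2$) and the six $4$-cycles (each giving $\tr[X^4]$) survive. Since $24\binom{d+3}{4} = d(d+1)(d+2)(d+3)$, this yields the exact identity $\int d\psi\,\bracket{\psi}{X}{\psi}^4 = \frac{3\tr[X^2]^2 + 6\tr[X^4]}{d(d+1)(d+2)(d+3)}$.

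Finally, to obtain the stated bound I would pass from this identity to an inequality using $\tr[X^4] \le \tr[X^2]^2$, which holds because $\tr[X^4] = \sum_i \lambda_i^4 \le (\sum_i \lambda_i^2)^2 = \tr[X^2]^2$, the eigenvalues $\lambda_i$ being real so that $\lambda_i^2 \ge 0$. Substituting gives $3\tr[X^2]^2 + 6\tr[X^4] \le 9\tr[X^2]^2$ and hence exactly the claimed bound.

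The computation is essentially routine once the symmetric-subspace formula is in hand; the only points requiring care are the bookkeeping of cycle types in $S_4$ (and the resulting coefficients $3$ and $6$) and remembering to invoke tracelessness of $\rho-\sigma$ to discard the fixed-point terms -- without that, the identity, transposition, and $3$-cycle terms would reintroduce powers of $\tr[X]$ that do not vanish in general. I do not anticipate a genuine obstacle; the main risk is a miscount of permutations, which is easily checked against the total $|S_4| = 24 = 1 + 6 + 3 + 8 + 6$.
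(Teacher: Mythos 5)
Your proposal is correct and follows essentially the same route as the paper's proof: the tensor-product trick, expansion of the symmetric-subspace projector over $S_4$, the cycle-type formula $\tr[X^{\otimes 4}P_\pi] = \prod_c \tr[X^{|c|}]$, tracelessness to discard fixed-point terms (leaving $3\tr[X^2]^2 + 6\tr[X^4]$), and the final bound $\tr[X^4] \le \tr[X^2]^2$. The only cosmetic difference is that you fix the normalisation via the rank $\binom{d+3}{4}$ of the symmetric subspace, whereas the paper computes $\tr P_{\mathrm{sym}}$ directly from the permutation sum; these are equivalent.
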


\begin{proof}
This is the same technique as the previous lemma, but is a little more involved. Writing
\[ \int d\psi \bracket{\psi}{(\rho-\sigma)}{\psi}^4 = \tr \left[(\rho-\sigma)^{\otimes 4} \int d\psi (\proj{\psi}^{\otimes 4}) \right], \]
we note that $\int d\psi(\proj{\psi}^{\otimes 4})$ is proportional to the projector onto the symmetric subspace of four $d$-dimensional systems, which we write as
\[ P_{sym} = \frac{1}{4!} \sum_{\sigma \in S_4} P_{\sigma}, \]
where $S_4$ is the symmetric group of order 4 and $P_{\sigma}$ is the operator that permutes the 4 systems according to the permutation $\sigma$. Let $\operatorname{Cyc}(\sigma)$ denote the sequence of cycle lengths in $\sigma$ (e.g.\ $\operatorname{Cyc}((12)(3)) = (2,1)$). Then, for any $d$-dimensional operator $X$, it holds that
\[ \tr[X^{\otimes 4} P_{\sigma}] = \prod_{c \in \operatorname{Cyc}(\sigma)} \tr [X^c], \]
which can be shown diagrammatically or by explicitly writing out the $P_{\sigma}$ matrix. In particular, $\tr P_{\sigma} = d^{|\operatorname{Cyc}(\sigma)|}$. Permutations of 4 elements break down into 5 conjugacy classes, as follows: there is 1 of the form $(1)(2)(3)(4)$; 6 of the form $(12)(3)(4)$; 3 of the form $(12)(34)$; 8 of the form $(123)(4)$; and 6 of the form $(1234)$.

Thus
\[ \tr P_{sym} = \frac{1}{4!} (d^4 + 6d^3 + 11d^2 + 6d) = \frac{d(d+1)(d+2)(d+3)}{4!}, \]
implying that
\[ \int d\psi(\proj{\psi}^{\otimes 4}) = \frac{1}{d(d+1)(d+2)(d+3)} \sum_{\sigma \in S_4} P_{\sigma}. \]
We can now calculate
\[ \tr \left[(\rho-\sigma)^{\otimes 4} \int d\psi (\proj{\psi}^{\otimes 4}) \right] = \frac{1}{d(d+1)(d+2)(d+3)} \left( 3\tr[(\rho-\sigma)^2]^2 +6 \tr[(\rho-\sigma)^4] \right), \]
where we use the fact that $\rho-\sigma$ is traceless to ignore all terms corresponding to permutations with fixed points. The upper bound claimed in the statement of the theorem follows by simply noting that $\tr[(\rho-\sigma)^4] \le \tr[(\rho-\sigma)^2]^2$.
\end{proof}

We are finally ready to prove Theorem \ref{thm:2norm}, which we restate for convenience.

\setcounter{skip}{\value{thm}}
\setcounter{thm}{\value{thm2}}

\begin{thm}
Let $\rho$, $\sigma$ be $d$-dimensional quantum states. Then
\[ \frac{1}{3} \|\rho-\sigma\|_2 \le d \int d\psi |\bracket{\psi}{(\rho-\sigma)}{\psi}| \le \|\rho-\sigma\|_2. \]
\end{thm}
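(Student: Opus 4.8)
Write $X = \rho - \sigma$, which is a traceless Hermitian operator, and set $Y_\psi = \bracket{\psi}{X}{\psi}$, viewed as a real-valued random variable as $\ket{\psi}$ ranges over the uniform (Haar) measure on pure states. Both inequalities will follow from moment estimates for $Y_\psi$, and the plan is to feed in the two moment computations already established, namely Lemma \ref{lem:2ndmoment} (the exact second moment $\int d\psi\, Y_\psi^2 = \tr[X^2]/(d(d+1))$) and Lemma \ref{lem:4thmoment} (the fourth-moment bound).

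For the upper bound I would simply apply Jensen's inequality (equivalently Cauchy--Schwarz against the constant $1$) to pass from the first absolute moment to the second:
\[ \int d\psi\, |Y_\psi| \le \left( \int d\psi\, Y_\psi^2 \right)^{1/2} = \sqrt{\frac{\tr[X^2]}{d(d+1)}}. \]
Multiplying by $d$ and using $d^2/(d(d+1)) = d/(d+1) \le 1$ gives $d \int d\psi\, |Y_\psi| \le \|X\|_2$ immediately.

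The lower bound is the substantive direction. The key step is the standard $L^1$--$L^2$--$L^4$ interpolation via H\"older's inequality: writing $Y^2 = |Y|^{2/3} \cdot |Y|^{4/3}$ and applying H\"older with exponents $3/2$ and $3$ gives
\[ \int d\psi\, Y_\psi^2 \le \left( \int d\psi\, |Y_\psi| \right)^{2/3} \left( \int d\psi\, Y_\psi^4 \right)^{1/3}, \]
which rearranges to $\int d\psi\, |Y_\psi| \ge (\int d\psi\, Y_\psi^2)^{3/2} / (\int d\psi\, Y_\psi^4)^{1/2}$. Substituting the exact second moment from Lemma \ref{lem:2ndmoment} and the fourth-moment upper bound from Lemma \ref{lem:4thmoment}, the $\tr[X^2]$ dependence collapses to $\tr[X^2]^{1/2}$, and after multiplying by $d$ the dimensional factors simplify to $\frac{1}{3}\sqrt{\tr[X^2]}\cdot \frac{\sqrt{(d+2)(d+3)}}{d+1}$. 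It then remains only to check $(d+2)(d+3) \ge (d+1)^2$, which holds for all $d \ge 0$, to conclude $d \int d\psi\, |Y_\psi| \ge \frac{1}{3}\|X\|_2$.

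The main obstacle is the lower bound, and within it the one point requiring care is that the fourth-moment bound must be tight enough: the interpolation inequality is saturated by variables concentrated on few values, so a loose fourth moment would degrade the constant. Here the factor $9$ in Lemma \ref{lem:4thmoment} (arising from the permutation count $3+6$ together with $\tr[X^4] \le \tr[X^2]^2$) propagates through the $(\cdot)^{1/2}$ as a factor $3$ in the denominator, producing exactly the constant $\frac{1}{3}$. Verifying that the dimension-dependent prefactor $\sqrt{(d+2)(d+3)}/(d+1)$ is $\ge 1$ for \emph{all} $d$ --- not merely asymptotically --- is the only place where a naive estimate could slip, so I would track those factors exactly rather than replacing them by their large-$d$ limits.
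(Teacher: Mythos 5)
Your proof is correct and follows essentially the same route as the paper: Jensen's inequality plus the second-moment lemma for the upper bound, and the $L^1$--$L^2$--$L^4$ moment comparison combined with both moment lemmas for the lower bound, including the same exact dimensional prefactor $\sqrt{(d+2)(d+3)}/(3(d+1))$. The only cosmetic difference is that you derive the key inequality $\E[|Y|] \ge \E[Y^2]^{3/2}/\E[Y^4]^{1/2}$ explicitly via H\"older, whereas the paper cites it as Berger's fourth moment method --- noting itself that it is ``just H\"older's inequality in disguise.''
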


\begin{proof}
The upper bound is straightforward:
\[ d \int d\psi |\bracket{\psi}{(\rho-\sigma)}{\psi}| \le d \left( \int d\psi \bracket{\psi}{(\rho-\sigma)}{\psi}^2 \right)^{1/2} = d \left( \frac{\tr[(\rho-\sigma)^2]}{d(d+1)} \right)^{1/2} \le \|\rho-\sigma\|_2, \]
where the first inequality is Jensen's inequality, and the equality is Lemma \ref{lem:2ndmoment}. For the lower bound, we use the fourth moment method of Berger \cite{berger97} (which is just H\" older's inequality in disguise). This states that, for any real-valued random variable $X$,
\[ \E[|X|] \ge \frac{\E[X^2]^{3/2}}{\E[X^4]^{1/2}}. \]
Applying this inequality gives
\[ d \int d\psi |\bracket{\psi}{(\rho-\sigma)}{\psi}| \ge d \frac{\left( \int d\psi \bracket{\psi}{(\rho-\sigma)}{\psi}^2 \right)^{3/2} }{\left( \int d\psi \bracket{\psi}{(\rho-\sigma)}{\psi}^4 \right)^{1/2}} \ge d \left(\frac{\tr[(\rho-\sigma)^2]}{d(d+1)}\right)^{3/2} \left( \frac{d(d+1)(d+2)(d+3)}{9 \tr[(\rho-\sigma)^2]^2} \right)^{1/2} \] 
by Lemmas \ref{lem:2ndmoment} and \ref{lem:4thmoment}, which simplifies to
\[ d \int d\psi |\bracket{\psi}{(\rho-\sigma)}{\psi}| \ge \frac{(d+2)^{1/2}(d+3)^{1/2}}{3(d+1)} \|\rho-\sigma\|_2 \ge \frac{1}{3} \|\rho-\sigma\|_2\]
as claimed.
\end{proof}


\section{Proof of Lemma \ref{lem:projsupp}}
\label{sec:projsupp}

We now prove Lemma \ref{lem:projsupp}, which we restate for convenience.

\setcounter{thm}{\value{thm3}}

\begin{lem}
Let $\mathcal{H} = \mathcal{H}_A \otimes \mathcal{H}_B$ be a finite-dimensional Hilbert space decomposed into subsystems $A$ and $B$. For any projector $P$ onto a subspace of $\mathcal{H}$, let $P^\perp=I-P$ be the projector onto the orthogonal subspace, and let $D$ be the projector onto the support of $\tr_B P$. Then, for any $\ket{\psi} \in \mathcal{H}$,
\[ \tr[ (D \otimes I)P^\perp \proj{\psi} P^\perp] \le \tr[(D \otimes I) \proj{\psi}] \tr[P^\perp \proj{\psi}]. \]
\end{lem}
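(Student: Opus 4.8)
I want to show
\[ \tr[(D\otimes I)P^\perp\proj{\psi}P^\perp] \le \tr[(D\otimes I)\proj{\psi}]\,\tr[P^\perp\proj{\psi}]. \]
The key structural fact I would try to exploit is the relationship between $D$ (the projector onto the support of $\tr_B P$ on system $A$) and $P$ itself. The defining property I expect to be crucial is that $P$ is ``supported inside'' $D\otimes I$: concretely, every vector in the range of $P$ lies in the range of $D\otimes I$, so that $(D\otimes I)P = P$, equivalently $P(D\otimes I)=P$ and hence $PD^\perp\otimes I$-type cross terms vanish, where I abbreviate $D^\perp = I_A - D$. Let me verify this is the right handle: if $\ket{\chi}\in\operatorname{ran}P$, its reduced operator on $A$ is $\tr_B\proj{\chi}$, whose support is contained in the support of $\tr_B P$ (since $P\ge\proj\chi$ up to scaling and the partial trace is operator-monotone in the appropriate sense), which is exactly $\operatorname{ran}D$. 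So indeed $(D\otimes I)P=P$.

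\textbf{Main steps.} First I would decompose $\ket\psi$ using the projector $D\otimes I$, writing $\ket\psi = \ket{\psi_0}+\ket{\psi_1}$ with $\ket{\psi_0}=(D\otimes I)\ket\psi$ and $\ket{\psi_1}=(D^\perp\otimes I)\ket\psi$, and set $a := \tr[(D\otimes I)\proj\psi]=\|\psi_0\|^2$. Because $P\le D\otimes I$ (as projectors, using the support containment above), we have $P(D^\perp\otimes I)=0$, so $P\ket\psi = P\ket{\psi_0}$ and therefore $P^\perp\ket\psi = \ket{\psi_0}-P\ket{\psi_0}+\ket{\psi_1}$. Now the left-hand side is $\tr[(D\otimes I)P^\perp\proj\psi P^\perp]=\|(D\otimes I)P^\perp\ket\psi\|^2$. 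Applying $D\otimes I$ kills the $\ket{\psi_1}$ contribution coming from the identity part but I must track it carefully: $(D\otimes I)P^\perp\ket\psi = (D\otimes I)(\ket{\psi_0}-P\ket{\psi_0}) + (D\otimes I)\ket{\psi_1}$. Since $P\ket{\psi_0}\in\operatorname{ran}P\subseteq\operatorname{ran}(D\otimes I)$ we get $(D\otimes I)P\ket{\psi_0}=P\ket{\psi_0}$, and $(D\otimes I)\ket{\psi_1}=0$, leaving $(D\otimes I)P^\perp\ket\psi = \ket{\psi_0}-P\ket{\psi_0} = P^\perp\ket{\psi_0}$. Hence the left-hand side equals $\|P^\perp\ket{\psi_0}\|^2 \le \|\psi_0\|^2 = a$.

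\textbf{Closing the bound.} The above gives $\tr[(D\otimes I)P^\perp\proj\psi P^\perp]=\|P^\perp\ket{\psi_0}\|^2$, and I need this to be at most $a\cdot\tr[P^\perp\proj\psi]$. Writing $b:=\tr[P^\perp\proj\psi]=\|P^\perp\ket\psi\|^2$, note $P^\perp\ket\psi = P^\perp\ket{\psi_0}+\ket{\psi_1}$ with the two summands orthogonal (the second lies in $\operatorname{ran}(D^\perp\otimes I)$, the first in $\operatorname{ran}(D\otimes I)$), so $b = \|P^\perp\ket{\psi_0}\|^2 + \|\psi_1\|^2 \ge \|P^\perp\ket{\psi_0}\|^2$. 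The desired inequality $\|P^\perp\ket{\psi_0}\|^2 \le a\,b$ then follows once I show $\|P^\perp\ket{\psi_0}\|^2 \le a^2$ would suffice when $a\le b$, but more cleanly: since $\ket{\psi_0}=(D\otimes I)\ket{\psi_0}$ has norm-squared $a$, and $P^\perp$ is a contraction, $\|P^\perp\ket{\psi_0}\|^2\le a$, and separately I expect $\|P^\perp\ket{\psi_0}\|^2 \le a\cdot b$ to come from combining $\|P^\perp\ket{\psi_0}\|^2\le b$ (shown above) with a Cauchy–Schwarz step relating $\|P^\perp\ket{\psi_0}\|^2$ to $a$ times the normalized overlap. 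The \emph{main obstacle} I anticipate is precisely this last multiplicative combination: getting the clean product $a\cdot b$ rather than just $\min(a,b)$ or $a$ alone. I would resolve it by applying Cauchy–Schwarz to $\|P^\perp\ket{\psi_0}\|^2 = \langle\psi_0|P^\perp|\psi_0\rangle$, bounding it by $\|\psi_0\|\cdot\|P^\perp\ket{\psi_0}\| = \sqrt{a}\,\|P^\perp\ket{\psi_0}\|$ and then using $\|P^\perp\ket{\psi_0}\|\le\sqrt b$, which yields $\|P^\perp\ket{\psi_0}\|^2 \le \sqrt{a}\sqrt{b}\,$—so I must sharpen the overlap estimate, likely by writing $P^\perp\ket{\psi_0}$ in terms of the full $P^\perp\ket\psi$ and exploiting orthogonality of the $D$ and $D^\perp$ sectors once more to recover the exact factor $a\cdot b$.
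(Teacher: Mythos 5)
Your structural work is correct, and it is in fact the same as the paper's: the key observation $(D\otimes I)P = P = P(D\otimes I)$ (the support of $P$ lies inside the range of $D\otimes I$), and the consequent identity $(D\otimes I)P^\perp\ket{\psi} = P^\perp\ket{\psi_0}$ with $\ket{\psi_0}=(D\otimes I)\ket{\psi}$, which gives $\tr[(D\otimes I)P^\perp\proj{\psi}P^\perp] = \|P^\perp\ket{\psi_0}\|^2$. But the proof then has a genuine gap exactly where you flag it: you never establish $\|P^\perp\ket{\psi_0}\|^2 \le ab$. The bounds you do obtain, $\|P^\perp\ket{\psi_0}\|^2 \le \min(a,b)$ and, via Cauchy--Schwarz, $\le \sqrt{ab}$, are strictly \emph{weaker}, since for $a,b \in [0,1]$ one has $ab \le \min(a,b) \le \sqrt{ab}$; no recombination of these norm estimates can recover the product. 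The fallback you float, $\|P^\perp\ket{\psi_0}\|^2 \le a^2$, is false in general: take $\mathcal{H}=\C^2\otimes\C^2$, $P=\proj{00}$ (so $D=\proj{0}$) and $\ket{\psi} = (\ket{01}+\ket{10})/\sqrt{2}$; then $\|P^\perp\ket{\psi_0}\|^2 = 1/2 = ab$ while $a^2 = 1/4$, so the lemma is tight here and that route is dead.

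The missing step is not a sharper overlap estimate but an exact evaluation followed by trivial algebra, and this is precisely how the paper closes the argument. Using $(D\otimes I)P(D\otimes I)=P$ once more,
\[ \|P^\perp\ket{\psi_0}\|^2 = \ip{\psi_0}{\psi_0} - \bracket{\psi_0}{P}{\psi_0} = a - p, \qquad \text{where } p := \bracket{\psi}{P}{\psi}, \quad \text{and} \quad b = \tr[P^\perp\proj{\psi}] = 1-p. \]
Hence the claim $a - p \le ab = a(1-p)$ is equivalent to $p(1-a)\ge 0$, which holds because $p \ge 0$ and $a = \tr[(D\otimes I)\proj{\psi}]\le 1$. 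The paper's proof is this same computation written with traces: it divides through by $1-p$, uses $(D\otimes I)P=P$ to simplify the numerator to $a-p$, and observes $(a-p)/(1-p)\le a$. So your approach would coincide with the paper's once you replace the attempted Cauchy--Schwarz and contraction bounds with the exact identity above; as it stands, the proposal does not prove the lemma.
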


\begin{proof}
The inequality clearly holds if $\tr[P^\perp \proj{\psi}] = 0$, so assuming this is not the case and dividing both sides by $\tr[P^\perp \proj{\psi}]$, the left-hand side is equal to
\[ \frac{\tr[ (D \otimes I)(I-P) \proj{\psi} (I-P)]}{1 - \tr[P \proj{\psi}]}. \]
The key observation which will allow us to simplify this expression is that $(D \otimes I)P = P = P(D \otimes I)$. To see this, note that the support of $P$ is contained within the subspace onto which $D \otimes I$ projects, implying that $D \otimes I$ acts as the identity with respect to $P$. The left-hand side thus simplifies to
\[ \frac{\tr[ (D \otimes I)\proj{\psi}] - \tr[ P \proj{\psi}]}{1 - \tr[P \proj{\psi}]} \le \frac{\tr[ (D \otimes I)\proj{\psi}](1 - \tr[ P \proj{\psi}])}{1- \tr[P \proj{\psi}]} = \tr[ (D \otimes I)\proj{\psi}] \]
as claimed.
\end{proof}



\begin{thebibliography}{10}

\bibitem{aaronson05a}
S.~Aaronson.
\newblock Limitations of quantum advice and one-way communication.
\newblock {\em Theory of Computing}, 1:1--28, 2004.
\newblock \url{quant-ph/0402095}.

\bibitem{ambainis07b}
A.~Ambainis and J.~Emerson.
\newblock Quantum t-designs: t-wise independence in the quantum world.
\newblock In {\em Proc. 22\textsuperscript{nd} Annual IEEE Conf. Computational
  Complexity}, pages 129--140, 2007.
\newblock \url{quant-ph/0701126}.

\bibitem{aubrun10a}
G.~Aubrun, S.~Szarek, and E.~Werner.
\newblock Hastings' additivity counterexample via {D}voretzky's theorem, 2010.
\newblock \url{arXiv:1003.4925}.

\bibitem{aubrun10}
G.~Aubrun, S.~Szarek, and E.~Werner.
\newblock Non-additivity of {R}enyi entropy and {D}voretzky's theorem.
\newblock {\em J. Math. Phys.}, 51:022102, 2010.
\newblock \url{arXiv:0910.1189}.

\bibitem{bartlett03}
S.~Bartlett, T.~Rudolph, and R.~Spekkens.
\newblock Classical and quantum communication without a shared reference frame.
\newblock {\em Phys. Rev. Lett.}, 91(2):027901, 2003.
\newblock \url{quant-ph/0302111}.

\bibitem{bennett05}
C.~H. Bennett, P.~Hayden, D.~Leung, P.~Shor, and A.~Winter.
\newblock Remote preparation of quantum states.
\newblock {\em IEEE Trans. Inform. Theory}, 51(1):56--74, 2005.
\newblock \url{quant-ph/0307100}.

\bibitem{berger97}
B.~Berger.
\newblock The fourth moment method.
\newblock {\em SIAM J. Comput.}, 24(6):1188–--1207, 1997.

\bibitem{bhatia97}
R.~Bhatia.
\newblock {\em {M}atrix {A}nalysis}.
\newblock Springer-Verlag, 1997.

\bibitem{brinkman05}
B.~Brinkman and M.~Charikar.
\newblock On the impossibility of dimension reduction in {$\ell_1$}.
\newblock {\em J. ACM}, 52(5):766--788, 2005.

\bibitem{buhrman01}
H.~Buhrman, R.~Cleve, J.~Watrous, and R.~de~Wolf.
\newblock Quantum fingerprinting.
\newblock {\em Phys. Rev. Lett.}, 87(16):167902, 2001.
\newblock \url{quant-ph/0102001}.

\bibitem{charikar02}
M.~Charikar and A.~Sahai.
\newblock Dimension reduction in the {$\ell_1$} norm.
\newblock In {\em Proc. 43\textsuperscript{rd} Annual Symp. Foundations of
  Computer Science}, pages 551--560, 2002.

\bibitem{cleve04}
R.~Cleve, P.~H{\o }yer, B.~Toner, and J.~Watrous.
\newblock Consequences and limits of nonlocal strategies.
\newblock In {\em Proc. 19\textsuperscript{th} Annual IEEE Conf. Computational
  Complexity}, pages 236--249, 2004.
\newblock \url{quant-ph/0404076}.

\bibitem{fawzi10}
O.~Fawzi, P.~Hayden, and P.~Sen.
\newblock From low-distortion norm embeddings to explicit uncertainty relations
  and efficient information locking, 2010.
\newblock \url{arXiv:1010.3007}.

\bibitem{gavinsky06}
D.~Gavinsky, J.~Kempe, and R.~de~Wolf.
\newblock Strengths and weaknesses of quantum fingerprinting.
\newblock In {\em Proc. 21\textsuperscript{st} Annual IEEE Conf. Computational
  Complexity}, pages 288--298, 2006.
\newblock \url{quant-ph/0603173}.

\bibitem{goodman09}
R.~Goodman and N.~R. Wallach.
\newblock {\em Symmetry, Representations and Invariants}.
\newblock Springer, New York, 2009.

\bibitem{hayden10}
P.~Hayden and A.~Winter.
\newblock The fidelity alternative and quantum measurement simulation, 2010.
\newblock \url{arXiv:1003.4994}.

\bibitem{indyk01}
P.~Indyk.
\newblock Algorithmic applications of low-distortion geometric embeddings.
\newblock In {\em Proc. 42\textsuperscript{nd} Annual Symp. Foundations of
  Computer Science}, pages 10--33, 2001.

\bibitem{indyk98}
P.~Indyk and R.~Motwani.
\newblock Approximate nearest neighbors: towards removing the curse of
  dimensionality.
\newblock In {\em Proc. 30\textsuperscript{th} Annual ACM Symp. Theory of
  Computing}, pages 604--613, 1998.

\bibitem{johnson84}
W.~Johnson and J.~Lindenstrauss.
\newblock Extensions of {L}ipschitz mappings into a {H}ilbert space.
\newblock {\em Contemporary Mathematics}, 26:189–--206, 1984.

\bibitem{kushilevitz97}
E.~Kushilevitz and N.~Nisan.
\newblock {\em Communication {C}omplexity}.
\newblock Cambridge University Press, 1997.

\bibitem{kushilevitz98}
E.~Kushilevitz, R.~Ostrovsky, and Y.~Rabani.
\newblock Efficient search for approximate nearest neighbor in high dimensional
  spaces.
\newblock In {\em Proc. 30\textsuperscript{th} Annual ACM Symp. Theory of
  Computing}, pages 614--623, 1998.

\bibitem{matthews09a}
W.~Matthews, S.~Wehner, and A.~Winter.
\newblock Distinguishability of quantum states under restricted families of
  measurements with an application to quantum data hiding.
\newblock {\em Comm. Math. Phys.}, 291(3):813--843, 2009.
\newblock \url{arXiv:0810.2327}.

\bibitem{nielsen00}
M.~A. Nielsen and I.~L. Chuang.
\newblock {\em Quantum computation and quantum information}.
\newblock Cambridge University Press, 2000.

\bibitem{sen06}
P.~Sen.
\newblock Random measurement bases, quantum state distinction and applications
  to the hidden subgroup problem.
\newblock In {\em Proc. 21\textsuperscript{st} Annual IEEE Conf. Computational
  Complexity}, page 287, 2006.
\newblock \url{quant-ph/0512085}.

\bibitem{watrous08a}
J.~Watrous.
\newblock Theory of quantum information lecture notes, 2008.
\newblock \url{http://www.cs.uwaterloo.ca/~watrous/quant-info/}.

\bibitem{winter04}
A.~Winter.
\newblock Quantum and classical message identification via quantum channels.
\newblock {\em Festschrift {``A S Holevo 60'' (O. Hirota, ed.)}}, pages
  171--188, 2004.
\newblock \url{quant-ph/0401060}.

\end{thebibliography}

\end{document}